\newtheorem{observation}[theorem]{Observation}
\newcommand{\abs}[1]{\left\vert#1\right\vert}
\newcommand{\eps}{\varepsilon}
\newcommand{\C}{\mathcal{C}}
\renewcommand{\S}{\mathcal{S}}
\newcommand{\degout}{\operatorname{degout}}
\newcommand{\tdeg}{\widetilde{\deg}}
\newcommand{\tdegout}{\widetilde{\degout}}
\newcommand{\tcost}{\widetilde{\cost}}
\newcommand{\eqdef}{\stackrel{\Delta}{=}}
\newcommand{\tC}{\tilde{C}}
\newcommand{\tcC}{\tilde{\cal C}}
\newcommand{\Vcostly}{V_{\mathrm{costly}}}
\newcommand{\vcostly}{v_{\mathrm{costly}}}
\newcommand{\dcost}{\operatorname{cost}^{\cal D}}
\def\S{{\mathcal S}}
\def\poly{\operatorname{poly}}
\def\one{{\bf 1}}
\def\LP{\operatorname{LP}}
\def\smallnegspace{\vspace{-1ex}}
\def\utility{{\operatorname{utility}}}
\def\R{{\bf R}}
\def\C{{\mathcal C}}
\def\cost{\operatorname{cost}}
\def\OPT{{\operatorname{OPT}}}
\def\negspace{\vspace*{-2ex}}
\def\bignegspace{\vspace*{-5ex}}
\def\smallnegspace{\vspace*{-1ex}}
\begin{document}
\title{No need to choose: How to get both  a PTAS and Sublinear Query Complexity}
\titlerunning{PTAS with Sublinear Query Complexity}
\author{Nir Ailon\inst{1} \and Zohar Karnin\inst{2}}
\institute{Technion IIT and Yahoo! Research, Haifa, Israel
\email{nailon@cs.technion.ac.il}
\and Yahoo! Research, Haifa, Israel
\email{zkarnin@yahoo-inc.com}
}


\maketitle
\negspace
\begin{abstract}

We revisit various PTAS's (Polynomial Time Approximation Schemes) for
minimization versions of dense problems, and show that they can be performed
with {sublinear query complexity}.
This means that not only do we obtain a $(1+\eps)$-approximation to the
NP-Hard problems in polynomial time, but also avoid reading the entire
input.
This setting is particularly advantageous when the price of reading parts of
the input is high, as is the case, for examples, where humans provide the
input.

Trading off query complexity with approximation is the raison d'etre of the
field of learning theory, and of the ERM (Empirical Risk Minimization)
setting in particular.  A typical ERM result, however,  does not deal with
computational complexity.
We discuss two particular problems for which (a)  it has already been shown
that
 sublinear querying is sufficient for obtaining a $(1+\eps)$-approximation
using unlimited computational power (an ERM result), and (b) with full
access to input, we could
get a $(1+\eps)$-approximation in polynomial time (a PTAS).  Here we show
that neither benefit
need be sacrificed.  We get a PTAS with efficient query complexity.

The first problem is known as Minimal Feedback Arc-Set in Tournaments
(MFAST). A PTAS has been discovered by Schudy and Mathieu, and an ERM result
by Ailon.  The second is $k$-Correlation Clustering ($k$-CC).  A PTAS has
been discovered by Giotis and Guruswami, and an ERM result by Ailon and
Begleiter.

Two techniques are developed.
The first  solves the problem for the
low-cost case of $k$-CC (the analogous case is already known for MFAST).
This requires a careful sampling scheme
together with proof of a structural property relating costs
of vertices against the optimal sample clustering with their costs against
the full optimal clustering.
The second  addresses
 the high-cost case, by showing that a classic method by Arora et al. (2002)
for obtaining additive approximations  can be made query
efficient.  The underlying technique is  ``double sampling'':  One sample
 is amenable to exhaustive solution enumeration, but well
approximates only polynomially many  solutions (including the optimal), and
another sample cannot be used  exhaustively search solutions, but well
approximates the cost of the enitre solution space, and is used for
verification.

\end{abstract}



\section{Introduction}

We study two NP-Hard combinatorial minimization problems 
for which it is known how to get a  $(1+\eps)$-approximate
solution  under two scenarios.  In the first scenario,
the algorithm has full access to the input, and is required to
compute in polynomial time.  In the second scenario, the
algorithm has exponential computational power but is allowed
to uncover only a sublinear amount of input.  In this work
we show that no requirement needs to be sacrificed.  
In other words, we satisfy the following three requirements simultaneously:
\begin{enumerate}
\item[(R1)] A polynomial time algorithm.
\item[(R2)] A $(1+\eps)$ approximate solution.
\item[(R3)] A sublinear (in input size) query complexity.
\end{enumerate}

The first problem is known as
$k$-Correlation Clustering ($k$-CC).
Given an undirected graph $G=(V, E)$, the objective is to find a decomposition
of $V$ into $k$ (possibly empty) disjoint subsets (clusters) $C_1,\dots, C_k$ so 
that the symmetric difference between $E$ and 
the set $\{(u,v): \exists i \mbox{ s.t. } \{u,v\}\subseteq C_i\}$ is minimized.
The second problem is the Minimum Feedback Arc-set in Tournaments (MFAST). In this problem, given a tournament $G=(V, A)$, the objective is to write its vertices in a sequence from left to right so that the number of edges pointing to the left (\emph{backward edges}) is minimized.\footnote{By tournament we mean
that for all distinct $u,v\in V$, either $(u,v)\in A$ or $(v,u)\in A$.}
Requirements (R1) and (R2) are achieved by Giotis et. al in \cite{GiotisGuruswami06} for $k$-CC and by Kenyon-Mathieu et. al in \cite{MatSch2007} for MFAST.  Requirements (R2) and (R3) were achieved very recently by Ailon et. al 
in \cite{AilonB12} for $k$-CC  and by Ailon in \cite{Ailon11:active} of MFAST.  In this work we obtain (R1)+ (R2)+(R3) for both problems.  Our result uses components
from the aforemention citations, together with new ideas required for obtaining our strong guarantees.  

\negspace
\subsection{Previous Work and Our Contribution}

In the world of combinatorial approximations, 
Correlation Clustering  (CC) (also known as cluster editing)
has been  defined by Blum et al. \cite{BBC04}. In the
original version there was no bound on the number
$k$ of clusters.  Correlation clustering 
is max-SNP-Hard \cite{CharikarW04}
but admits constant factor polynomial times approximations (e.g. \cite{CharikarW04,Ailon:2008:AII}).
Maximization versions have also been considered \cite{DBLP:conf/soda/Swamy04}.  In this work we concentrate
on the minimization problem only, which is more difficult for
the purpose of obtaining a PTAS.  The $k$-correlation
clustering ($k$-CC), in which the number of output clusters
is bounded by $k$, is also NP-Hard but admits a PTAS
\cite{GiotisGuruswami06} running in time $n^{O(9^k/\eps^2)}\log n$. 
There is a natural machine learning
theoretical interpretation CC:  The instance space is identified
with the space of element pairs, and each edge (resp. non-edge) in $G$ is a label stipulating equivalence (resp. non-equivalence) of the corresponding pair.
The CC cost minimizes the \emph{risk}, defined as the number
of pairs of elements on which the solution disagrees with.
Roughly speaking, an algorithm attempting to minimize the risk  by, instead, minimizing an  estimator thereof obtained by sampling labels, is an \emph{Empirical Risk Minimization} (ERM) algorithm.  An ERM algorithm need not be constrained
by computational restrictions, and should be thought of as a \emph{information theoretical}, not computational result.
It should be noted that machine learning clustering theoreticians and practitioners have been studying how to use correlation clustering type labels in conjunctions with more traditional
geometric clustering approaches (e.g. $k$-means - see Basu's thesis \cite{basu05} and references therein).  Such labels are expensive because they require solicitation from humans.  Minimizing  query complexity is hence  important.

From a combinatorial  optimization point of interest, 
MFAST is NP-Hard \cite{Alon06} but admits a PTAS \cite{MatSch2007} (see references therein for a more elaborate history of this important problem).  The problem also
has a machine learning theoretical interpretation, if we
think of the directionality of the edge connecting $u$ and $v$ in the tournament $T$  as a label.  An ERM result has been
obtained by Ailon \cite{Ailon11:active} very recently. 
Interestingly, although  Ailon's algorithm is not computationally efficient, it relies quite heavily
on the ideas used in the PTAS \cite{MatSch2007}.

In this work we obtain requirements (R1),(R2) and (R3) simultaneously,
for both $k$-CC and MFAST.


\negspace
\section{Notations}
\negspace
For a natural number $n$ we denote by $[n]$ the set of integers $\{1,\ldots,n\}$.
Let $V$ denote a ground set of $n$ elements.  
In the $k$-CC problem, $V$ is endowed with an undirected graph $G=(V, E)$.
A solution to the problem is given as a clustering $\C = \{C_1,\dots, C_k\}$ of $V$ into $k$ disjoint
parts. We define $\equiv_\C$ to be the equivalence relation in which $C_1,\dots, C_k$ are the equivalence classes.
 Equivalently, we view a solution as an undirected graph $G(\C) = (V, E(\C))$ in which  $(u,v)\in E(\C)$
if and only if $u \equiv_\C v$.  The cost $\cost_G(\C)$ of a solution $\C$ is the cardinality of the symmetric difference between the sets $E$ and $E(\C)$.  When the input is
clear from the context, we will simply write $\cost(\C)$.

In the MFAST problem, $V$ is endowed with a tournament graph $T=(V, A)$.\footnote{A \emph{tournament} means
that exactly one of $(u,v)$ or $(v,u)$ are in $A$ for all $u\neq v$.}
A solution is an injective function $\pi : V\mapsto [n]$ (a permutation).  We define $\prec_\pi$ to denote
the induced order relation, namely: $u \prec_\pi v$ if and only if $\pi(u) < \pi(v)$.  Equivalently, a solution
can be viewed as a tournament $T(\pi) = (V, A(\pi))$, where $(u,v)\in A(\pi)$ if and only if $u \prec_\pi v$.  The
loss $\cost_T(\pi)$ of a solution is the number of edges $(u,v)\in T(\pi)$ such that $(v,u)\in A$. In words, a unit cost
is incurred for each inverted edge.  When the input $T$ is clear from the context, we may simply write $\cost(\pi)$.

\negspace
\section{Statement of Results and Method Overview}\label{sec:overview}
\negspace
As in \cite{GiotisGuruswami06,MatSch2007}, 
our query efficient PTAS for both $k$-CC and MFAST,  distinguishes between a high cost case and a low cost case.
In MFAST, \emph{high cost}
means that the optimal solution has cost  at least $P(\eps) n^2$, where $P(\eps)= \Theta(\eps^2 )$.
In $k$-CC, high cost means that the optimal solution has
cost at least $Q(\eps, k) n^2$, where $Q(\eps, k) = \Theta(\eps^6/k^{18})$.

In the low cost case, the problem has been solved
for MFAST by Ailon \cite{Ailon11:active}.  
There it is shown
that $O(n\eps^{-4}\log^4 n)$ edges from $T$ are sufficient
for finding a $(1+\eps)$-approximate solution,
in $\poly(n, \eps^{-1})$ time.
We refer the reader to 
\cite{Ailon11:active} for the details.
As for the low cost case for $k$-CC,  we show a PTAS with
$o(n^2)$ query complexity in 
 in Section~\ref{sec:low}.  The main idea of the algorithm
is similar to that in \cite{GiotisGuruswami06}, but defers
in a significant way.
 Roughly speaking, both algorithms choose a sample of vertices 
and enumerate over $k$-clusterings of the sample, while trying to compute optimal big clusters from the sample.  In \cite{GiotisGuruswami06}, for each such choice of sample $k$-clustering, a clustering of $V$ is chosen, and recursion is executed on the union of small clusters.  
Here, we use the sample \emph{in vitro} to learn a strong structural property of any optimal solution for the entire input.  In particular, we
don't need to return from a recursion to perform this learning.

For the high cost case we invoke an algorithm
giving an additive $\eps P(\eps)n^2$ (resp. $\eps Q(\eps, k)n^2$) approximation for MFAST (resp.  for $k$-CC).
To that end, we use a  standard LP based technique \cite{DBLP:journals/mp/AroraFK02}, together with another
double sampling trick  necessary for query efficiency, which we describe in  Section~\ref{sec:high} for the MFAST case (the $k$-CC case is easier).  The main result there is as follows:
\smallnegspace
\begin{theorem}\label{thm:highcost}
There exists a polynomial (in $n$) time algorithm
for obtaining an additive $\eps P(\eps)n^2$ (resp. $\eps Q(\eps, k) n^2)$) approximation for MFAST (resp. for $k$-CC).
The algorithm queries  $O(\eps^{-2}P^{-2}(\eps)n\log n)$
(resp. $O(\eps^{-2}Q^{-2}(\eps, k) n^2)$) input edges
and runs in time $n^{O(\eps^{-2}P^{-2}(\eps)\log P(\eps))}$ 
(resp. $n^{O(\eps^{-2}Q^{-2}(\eps, k) \log k)}$).
\end{theorem}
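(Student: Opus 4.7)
I would sketch MFAST; the $k$-CC version is analogous, replacing the position buckets below with the $k$ clusters. The plan is to adapt the Arora--Frieze--Kaplan additive-approximation scheme \cite{DBLP:journals/mp/AroraFK02} via a double-sampling idea: an \emph{enumeration} sample $S_1$ against which a $\poly(n)$-sized family of candidate permutations is built, and a \emph{verification} sample $S_2$ of random pairs used to pick the best candidate. First I would partition $[n]$ into $B=\Theta(1/(\eps P(\eps)))$ contiguous position buckets of equal size $s=n/B$. Any permutation $\pi$ is then specified by a bucket assignment $\beta_\pi:V\to[B]$ together with an intra-bucket ordering, and since the total number of within-bucket pairs is $B\binom{s}{2}=O(\eps P(\eps)n^2)$, two permutations sharing $\beta_\pi$ differ in cost by at most $O(\eps P(\eps)n^2)$, which is absorbed into the target additive budget. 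It therefore suffices to produce a $\hat\beta:V\to[B]$ whose inter-bucket inversion cost is within additive $O(\eps P(\eps)n^2)$ of $\beta_{\pi^*}$'s, where $\pi^*$ is an optimum.

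Next I would draw $S_1\subseteq V$ uniformly of size $|S_1|=\Theta(\eps^{-2}P^{-2}(\eps)\log n)$ and query every arc incident to $S_1$, for a total of $|S_1|\cdot n=O(\eps^{-2}P^{-2}(\eps)n\log n)$ arcs. The algorithm enumerates the $N:=B^{|S_1|}=n^{O(\eps^{-2}P^{-2}(\eps)\log(1/P(\eps)))}$ bucket assignments $\beta:S_1\to[B]$; for each, it extends $\beta$ to a full $\hat\beta:V\to[B]$ by greedily placing every $v\notin S_1$ in the bucket $j$ that minimizes $v$'s empirical inter-bucket inversion cost, computed from the queried arcs between $v$ and the $\beta$-labelled vertices in each bucket (with the usual $n/|S_1|$ rescaling). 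Ordering vertices by bucket produces a candidate $\hat\pi$. To rank the $N$ candidates without touching fresh sample-to-all arcs, I would draw an independent verification sample $S_2$ of $\Theta(\eps^{-2}P^{-2}(\eps)\log N)$ uniformly random vertex pairs from $V\times V$, query their directions, and output the candidate of smallest empirical cost on $S_2$.

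The hard step will be \emph{completeness}: with high probability over $S_1$, the candidate whose $\beta$ equals $\beta_{\pi^*}|_{S_1}$ produces $\hat\pi$ with cost at most $\cost(\pi^*)+O(\eps P(\eps)n^2)$. Under this $\beta$, every bucket holds $|S_1|/B=\Theta(\eps^{-1}P^{-1}(\eps)\log n)$ sample vertices, so Hoeffding yields, for every $v\notin S_1$ and every $j\in[B]$ simultaneously (union bound over $nB$ events, which fits the sample budget), an empirical estimate of $v$'s true per-bucket cost within additive $O(\eps P(\eps)n)$; greedy bucket choice is then suboptimal by at most $O(\eps P(\eps)n)$ per vertex, summing to the claimed additive error. \emph{Soundness} of the verification step is a routine union bound over $N$ candidates plus Hoeffding on $S_2$, which forces the $\log N$ factor in $|S_2|$. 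The main obstacle is precisely this sample-size calibration: $|S_1|/B$ must support per-bucket concentration at precision $\eps P(\eps)n$ against $nB$ events, while $|S_1|$ itself multiplies the $n$ arcs per sample vertex in the query budget; the parameters above are what simultaneously satisfy both constraints. For $k$-CC the same blueprint applies with $k$ clusters in place of buckets, giving a $k^{|S_1|}$ candidate family (hence the $\log k$ in the runtime exponent), and verification is sampled from $V\times V$ against the $\binom{n}{2}$-scale pair universe, yielding the stated $\Theta(n^2)$ query term.
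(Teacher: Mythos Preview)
Your architecture (bucket reduction, an enumeration sample whose labelings are exhausted, a verification sample to select the winner) mirrors the paper's. The crucial divergence is that you extend each enumerated labeling of $S_1$ to all of $V$ by \emph{greedy} placement, whereas the paper solves and rounds an LP in the style of Arora--Frieze--Kaplan. Your completeness argument breaks at exactly the point the LP is designed to handle.

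Concretely: under the correct labeling $\beta=\beta_{\pi^*}|_{S_1}$, the concentration you invoke indeed gives, for every $v$ and $j$, that the empirical cost of placing $v$ in bucket $j$ approximates $\cost^v(\sigma^*_{v\to j})$ --- the cost of $v$ when \emph{every other vertex sits in its $\sigma^*$-bucket} --- to within $O(\eps P(\eps)n)$. Greedy therefore guarantees $\cost^v(\sigma^*_{v\to\hat j(v)})\le\cost^v(\sigma^*)+O(\eps P(\eps)n)$. But summing over $v$ bounds only the hybrid quantity $\sum_v\cost^v(\sigma^*_{v\to\hat j(v)})$, which is \emph{not} $\cost(\hat\beta)$: each pair $(u,v)$ is charged with $u$ in its $\sigma^*$-bucket and $v$ in its $\hat\beta$-bucket (and symmetrically), not with both in $\hat\beta$. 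Since greedy may relocate a constant fraction of vertices, the pairwise interactions among relocated vertices can shift $\cost(\hat\beta)$ by $\Theta(n^2)$ relative to the hybrid sum, so ``summing to the claimed additive error'' is unjustified. For MFAST there is a second gap: greedy gives no bucket-size control, and a single bucket of size $\Theta(n)$ contributes $\Theta(n^2)$ intra-bucket cost when you pass to a permutation.

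The paper's LP handles both issues. The constraints of Lemma~\ref{thm:bigsample1} relate, for each $u$ and $i$, the constant $\cost^{u,S}(\sigma_{u\to i})$ to the quantity $\cost^{u,\S}(\sigma_{u\to i})$, which is \emph{linear in the unknown bucketing}; after rounding these constraints are (approximately) satisfied by the output $\sigma'$ itself, so summing yields $|\cost^S(\sigma')-\cost^\S(\sigma')|=O(\eps P(\eps)n^2)$, and Lemma~\ref{thm:bigsample} ties $\cost^\S(\sigma')$ to the true $\cost(\sigma')$. An explicit balance constraint in the LP survives rounding and controls bucket sizes. The query-efficiency device is not a pair sample $S_2$ but the per-vertex ensemble $\S=\{S^u:u\in V\}$: independence across $u$ is what makes $\cost^\S$ concentrate uniformly over \emph{all} bucketings (Lemma~\ref{thm:bigsample}), and the arcs $u\leftrightarrow S^u$ are precisely what is needed to write the LP coefficients. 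Your query pattern (all arcs incident to a single shared $S_1$) has the same budget but does not supply these coefficients, since for $u\notin S_1$ the relevant expression becomes a constant rather than a linear form in the unknowns. To repair the argument you would need either to adopt the LP with the $\{S^u\}$ ensemble, or to supply a genuinely different completeness proof that controls $\cost(\hat\beta)$ directly rather than the hybrid.
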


\smallnegspace
In order to know whether we are at all in the high cost case,
we apply the additive approximation algorithm in any case, and
approximate the cost of the returned solution to within an additive error
of $\Theta( P(\eps) n^2)$ (resp.  $\Theta(Q(\eps, k) n^2)$).  This estimation can clearly be done, with success probability at least $1-n^{-10}$, by sampling at most $O(P^{-2}(\eps)\log n)$
(resp. $O(Q^{-2}(\eps,k)\log n)$ ) edges, by standard measure concentration arguments. \footnote{Note that the algorithm in \cite{Ailon11:active}
relies on a divide and conquer recursive strategy, in which 
the high cost algorithm and test must be implemented at
each recursion node.  This also holds for our $k$-CC algorithm, which identifies large clusters and then recurses on small ones.  The recursive calls must solve and test for the high cost case as well.} This bound is overwhelmed by
the bounds of Theorem~\ref{thm:highcost}.
Our
 main results are summarized as follows.
\smallnegspace
\begin{theorem} \label{thm:PTAS low cost kcc}
There exists a PTAS for $k$-CC
running in time  $n^{O(\eps^{-14}k^{36}\log k)}$ 
and requiring at most $O(\eps^{-14}k^{36}n\log n)$ edge queries. 
With probability at least $1-n^{-3}$, 
it outputs a clustering $\tcC$ with $\cost(\tcC) \leq\cost(\C^*)(1+\eps)$, where $\C^*$ is an optimal solution.
\end{theorem}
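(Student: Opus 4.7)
The plan is to obtain the PTAS by dispatching between the low-cost and high-cost regimes identified in Section~\ref{sec:overview}. As a first step I would always run the additive-approximation algorithm of Theorem~\ref{thm:highcost} for $k$-CC to produce a clustering $\tcC$ satisfying $\cost(\tcC) \leq \cost(\C^*) + \eps Q(\eps, k) n^2$ within the stated resource budget. I would then draw $O(Q^{-2}(\eps, k)\log n)$ additional edge samples and use a Chernoff bound on the symmetric-difference count to estimate $\cost(\tcC)$ to within additive $\Theta(Q(\eps, k) n^2)$.

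Based on that estimate I would decide the regime. If the estimate exceeds a suitable constant multiple of $Q(\eps, k) n^2$, the combined additive slack certifies $\cost(\C^*) = \Omega(Q(\eps, k) n^2)$, whence Theorem~\ref{thm:highcost} already gives $\cost(\tcC) \leq (1+O(\eps))\cost(\C^*)$, and I would output $\tcC$ after an initial rescaling of $\eps$. Otherwise the estimate certifies $\cost(\C^*) \leq Q(\eps, k) n^2$, and I would invoke the low-cost PTAS developed in Section~\ref{sec:low}, which exploits the structural property relating per-vertex costs against a sample-optimal clustering to per-vertex costs against $\C^*$ in order to deliver a $(1+\eps)$-approximation using $O(\eps^{-14}k^{36} n\log n)$ queries and $n^{O(\eps^{-14}k^{36}\log k)}$ time.

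The final resource bounds follow by adding the two regimes: substituting $Q(\eps, k) = \Theta(\eps^6/k^{18})$ into the exponent $\eps^{-2} Q^{-2}(\eps,k)\log k$ of Theorem~\ref{thm:highcost} produces $\Theta(\eps^{-14}k^{36}\log k)$, matching the runtime in the statement; the query bound is dominated by the low-cost branch. A union bound over the $O(1)$ probabilistic subroutines, each tuned to fail with probability at most $n^{-4}$, yields total failure probability at most $n^{-3}$.

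The main obstacle will be the low-cost analysis of Section~\ref{sec:low}: showing that any $k$-clustering that is optimal on a uniformly chosen vertex sample of appropriate size essentially commits every non-sampled vertex to its correct large cluster in every near-optimal global clustering, and doing so with sublinear query access to $G$. Establishing such an \emph{in vitro} learning guarantee without resorting to the recursive ``solve globally, then recurse on small clusters'' framework of \cite{GiotisGuruswami06} is where the real work lies; the dispatch logic described above is then routine glue.
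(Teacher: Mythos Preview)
Your dispatch between the high-cost additive approximation (Theorem~\ref{thm:highcost}) and the low-cost sample-based algorithm of Section~\ref{sec:low} is exactly the paper's route, and your arithmetic on $Q(\eps,k)$ recovers the stated exponents. Two minor corrections: the paper records that the \emph{high}-cost branch, not the low-cost one, dominates the query count (both give the same order, so nothing breaks), and the low-cost algorithm \emph{does} still recurse on the union of small clusters (Algorithm~\ref{alg:CC PTAS}, step~4)---the novelty over \cite{GiotisGuruswami06} is that the sample is optimally clustered \emph{in vitro} (Algorithm~\ref{alg:cost_add_aprx}) to learn per-vertex costs before committing to large clusters, not that recursion is eliminated.
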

\negspace\begin{theorem} \label{thm:PTAS low cost mfast}
There exists a PTAS for MFAST
running in time  $n^{O(\eps^{-6})}$ and requiring at most $O(\eps^{-6}n\log n + \eps^{-4}n\log^4 n)$ edge queries. 
With probability at least $1-n^{-3}$, it outputs a permutation $\sigma$ with $\cost(\sigma) \leq\cost(\pi^*)(1+\eps)$, where $\pi^*$ is an optimal solution.
\end{theorem}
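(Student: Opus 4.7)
The plan is to combine the two already-available ingredients by a regime split on whether $\cost(\pi^*)\ge P(\eps)n^2$ or not, with $P(\eps)=\Theta(\eps^2)$. First I would run the high-cost algorithm of Theorem~\ref{thm:highcost} on $T$ to obtain a permutation $\sigma_H$ with $\cost(\sigma_H)\le \cost(\pi^*)+\eps P(\eps)n^2$, and in parallel run Ailon's algorithm from \cite{Ailon11:active} to obtain $\sigma_L$, which is a $(1+\eps)$-approximation whenever $\cost(\pi^*)\le P(\eps)n^2$. I would then approximate $\cost(\sigma_H)$ and $\cost(\sigma_L)$ to additive accuracy $\tfrac{1}{3}\eps P(\eps)n^2$ by sampling $O(\eps^{-6}\log n)$ random pairs and applying Hoeffding's inequality (each estimator failing with probability at most $n^{-10}$), and output whichever permutation has the smaller estimate.

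Correctness proceeds by case analysis. In the high-cost regime $\cost(\pi^*)\ge P(\eps)n^2$, the additive slack of $\sigma_H$ is at most $\eps\cdot\cost(\pi^*)$, so $\sigma_H$ is a $(1+\eps)$-approximation; any permutation certified competitive with $\sigma_H$ up to the estimator's slack inherits a $(1+O(\eps))$-approximation guarantee. In the low-cost regime $\sigma_L$ is a $(1+\eps)$-approximation, and both $\sigma_H$ and $\sigma_L$ have true cost at most $\cost(\pi^*)+O(\eps P(\eps)n^2)$, so the additive cost comparison cannot return a candidate whose true cost exceeds $\cost(\pi^*)(1+O(\eps))$ in this regime either. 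Rescaling $\eps$ by a universal constant at the outset converts the $O(\eps)$ into the target $\eps$.

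Summing up the resources, the query complexity is $O(\eps^{-2}P^{-2}(\eps)n\log n)+O(\eps^{-4}n\log^4 n)+O(\eps^{-6}\log n) = O(\eps^{-6}n\log n + \eps^{-4}n\log^4 n)$, matching the claim. The running time is dominated by Theorem~\ref{thm:highcost}, giving $n^{O(\eps^{-2}P^{-2}(\eps)\log P(\eps))} = n^{O(\eps^{-6})}$ after absorbing $\log(1/\eps)$ into the exponent's constant. A union bound over the two algorithmic subroutines and the two cost estimators yields overall success probability at least $1-n^{-3}$.

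The main obstacle is the boundary regime $\cost(\pi^*)\approx P(\eps)n^2$, where the additive guarantee of $\sigma_H$ is comparable in size to $\cost(\pi^*)$ itself and the multiplicative guarantee of $\sigma_L$ only barely applies; the resolution is the observation used above that both candidates are $(1+O(\eps))$-competitive there, so the cost-based selection is safe up to the preliminary rescaling of $\eps$.
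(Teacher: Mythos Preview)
Your regime-split strategy matches the paper's, but the selection rule you propose --- run both algorithms and output whichever has smaller estimated cost --- has a genuine gap in the \emph{deep} low-cost regime, not merely at the boundary you flag. Suppose $\cost(\pi^*)=o(\eps P(\eps)n^2)$. Then $\sigma_L$ is excellent, but $\sigma_H$ is only guaranteed $\cost(\sigma_H)\le\cost(\pi^*)+\eps P(\eps)n^2$, which may well be $\Theta(\eps P(\eps)n^2)\gg\cost(\pi^*)$. Your cost estimator has additive accuracy $\tfrac13\eps P(\eps)n^2$, of the same order, so the comparison can favor $\sigma_H$ --- and then the output is not a $(1+O(\eps))$-approximation at all. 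The sentence ``both $\sigma_H$ and $\sigma_L$ have true cost at most $\cost(\pi^*)+O(\eps P(\eps)n^2)$, so the additive cost comparison cannot return a candidate whose true cost exceeds $\cost(\pi^*)(1+O(\eps))$'' is simply false when $\cost(\pi^*)\ll P(\eps)n^2$: an additive $O(\eps P(\eps)n^2)$ slack is not a multiplicative $O(\eps)$ slack there.

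The paper avoids this by \emph{branching} rather than comparing: it always runs the additive-approximation algorithm, estimates $\cost(\sigma_H)$ to additive accuracy $\Theta(P(\eps)n^2)$, and if the estimate falls below a fixed threshold it discards $\sigma_H$ and invokes the low-cost algorithm instead. A low estimate certifies $\cost(\pi^*)=O(P(\eps)n^2)$, so Ailon's guarantee applies; a high estimate certifies $\cost(\pi^*)=\Omega(P(\eps)n^2)$, so the additive error of $\sigma_H$ converts to a multiplicative one. Your argument is easily repaired by switching to this threshold test. (The paper also notes a second subtlety you elide: Ailon's low-cost algorithm is itself a divide-and-conquer recursion, and the high-cost algorithm and test must be re-applied at each recursive subproblem; the stated query and time bounds absorb this.)
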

\smallnegspace
Note that the running times are overwhelemed by the high
cost case in both, and the query complexity is overwhelemed
by the high cost case in Theorem~\ref{thm:PTAS low cost kcc}.
We also note that we did not make a real effort to optimize the constants, including the exponents of $k,\eps$.


\section{Query Efficient PTAS for Low Cost in  $k$-CC }\label{sec:low}
\negspace
We study the low cost case of $k$-CC on input $G=(V, E)$, and analyze an algorithm satisfying (R1)+(R2)+(R3).  
We need two  ingredients.  In Section~\ref{sec:low1} we approximate
the contribution of a single node $v$ to the cost of any solution
identical to the optimal solution except (maybe) for a change
in the cluster to which $v$ belongs.
In Section~\ref{sec:low2} we achieve the PTAS, using a strategy similar to that of Giotis et al. in \cite{GiotisGuruswami06}:  Identification of the large clusters in the optimal solution and recursion on the remainder. Note that the algorithm of \cite{GiotisGuruswami06} does not satisfy (R3), hence ours makes better use of the queried information.
\negspace
\subsection{An additive approximation of vertex costs}\label{sec:low1}

A major component in our PTAS for $k$-clustering is an additive approximation for the contribution of each vertex to the cost of the clustering. We  start by formally defining this contribution, and then present Algorithm~\ref{alg:cost_add_aprx} and its analysis.

\begin{definition}
Let ${\cal C^*}=\{ C_1^*,\ldots,C_k^*\}$ be an optimal $k$-clustering, and assume its cost is $\gamma n^2$ for some $\gamma\geq 0$.
For $v\in V$  let $j^*(v)$ be defined as the unique index such that $v\in C^*_{j(v)}$.  Let $C^*(v) = C^*_{j^*(v)}$.
Let $\one_{u+v}$ be an indicator variable for the predicate $(u,v)\in E$, and similarly define the complement $\one_{u-v} = 1 - \one_{u+v}$.
 Let $ \deg_+(v,j) = \sum_{u \in C^*_j \setminus \{v\}} \one_{u+v}$,  
$\deg_-(v,j) = \sum_{u \in C^*_j\setminus\{v\}} \one_{u-v}$, $\degout_+(v,j) = \sum_{u \in (V\setminus C^*_j)\setminus\{v\}} \one_{u+v}$, and $\degout_-(v,j) = \sum_{u \in (V\setminus C^*_j)\setminus\{v\}} \one_{u-v}$.
 Let $\cost^*(v) = \sum_{ u \in C^*(v)\setminus\{v\}} 1_{u-v} + \sum_{u \notin C^*(v)} 1_{u+v} $. Notice that $\cost^*(v) = \deg_-(v,j(v)) +  \degout_+(v,j(v))$ and  $\cost(\C^*)=\frac{1}{2} \sum_v \cost^*(v)$.
 For any $j \in [k]$, let $\cost^*(v,j) \eqdef \deg_-(v,j) + \degout_+(v,j)$. That is, $\cost^*(v,j)$ is the contribution of the vertex $v$ to the cost of the clustering that is identical to $\C^*$, except the location of $v$, which is reset to $C^*_j$.
\end{definition}

\begin{algorithm} 
\caption{Additive approximation of $\cost^*$}
\label{alg:cost_add_aprx}
\emph{Input}: A graph $G=(V,E)$, a parameter $\beta>0$ and integer $k>1$\\
\emph{Output}: $\forall v \in V$, and $j\in[k]$, an estimation $\tcost(v,j)$ to $\cost(v,j)$
\newline
\\
Choose $S=(v_1,\ldots,v_t)$, where $t= c \log(n) \beta^{-9}$ ($c$ is some sufficiently large universal constant), be a multiset of i.i.d. uniformly randomly chosen vertices from $V$.

Let $\tilde{S}_1,\ldots,\tilde{S}_k$ be an optimal $k$-clustering for the reduced problem $(S, E_{|S})$ (where $E_{|S} = E \cap (S\times S)$), where the solution is found using exhaustive search.

For any $v\in V$, $j \in [k]$, let:
$ \tdeg_-(v,j) \eqdef \sum_{v\in \tilde S_j\setminus \{v\}} \one_{u-v}$ and
$\tdegout_+(v,j) \eqdef \sum_{v\in (S \setminus \tilde S_j) \setminus \{v\}} \one_{u+v}$.
(The summations count elements of $S$ with multiplicities.)

Output for every $v\in V$, $j \in [k]$ the estimation: $$\tcost(v,j)\eqdef \frac n {|S|}  \left (\tdeg_-(v,j)+\tdegout_+(v,j)\right )\ .$$
\negspace
\end{algorithm}

\noindent
The rest of this section proves the following
guarantee of Algorithm~\ref{alg:cost_add_aprx}.
\begin{theorem} \label{thm:cost_add_apx}
Fix $\beta>0$, to be passed as paramater to Algorithm~\ref{alg:cost_add_aprx}.
There exist some universal constant $c$ such that if $\gamma \leq c \beta^6$ then for all $v \in V$, $j \in [k]$ it holds that for the output of the algorithm, 
after possibly renaming the optimal clusters $\{C^*_1,\dots, C^*_k\}$,
$\abs{ \tcost(v,j) - \cost^*(v,j)  } < \beta n.$
For any input, Algorithm~\ref{alg:cost_add_aprx} will run in $n^{O(\beta^{-9}\log k)}$ time
and will require at most $O(n\log(n)\beta^{-9})$ edge queries.
\end{theorem}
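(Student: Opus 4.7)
The plan is to decompose the analysis into three components: (i) standard sampling concentration for cluster sizes and for edges within $S$; (ii) a structural lemma showing that, after a suitable renaming, the optimal sample clustering $\tilde S_1,\ldots,\tilde S_k$ is close to the restriction $S^*_j \eqdef C^*_j\cap S$; and (iii) Hoeffding concentration for the per-vertex estimators $\tdeg_-(v,j)$ and $\tdegout_+(v,j)$.

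First I would dispose of the easy pieces. With $t = c\log(n)\beta^{-9}$, Chernoff applied to $|S\cap C^*_j|$ and to the number of edges in $E\triangle E(\C^*)$ that land inside $S\times S$ gives, with probability at least $1-n^{-10}$, that (a) $|S^*_j|$ deviates from $(|C^*_j|/n)t$ by at most $O(\sqrt{t\log n})$ for every $j$, and (b) the induced cost of $S^*$ on the sample is within a constant factor of $\gamma t^2$. From (b), since $\tilde S$ is \emph{optimal} on the induced instance, its induced cost is also $O(\gamma t^2)$. Finally, for each fixed $v\in V$ and each candidate subset $T\subseteq S$, the empirical quantity $(n/|S|)\sum_{u\in T}\one_{u-v}$ is an unbiased estimator of $(n/|S|)|T|\cdot p$ for some $p\in[0,1]$, and Hoeffding yields deviation at most $O(n\sqrt{\log n/t})=O(n\beta^{4.5})$ uniformly over the polynomially many $(v,j)$ pairs.

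The main obstacle is the structural lemma of component (ii): I need to show that any clustering $\S$ of $S$ with induced cost $O(\gamma t^2)$ agrees with $S^*$, after a suitable permutation $\sigma$ of its labels, on all but $O(\beta^3 t)$ sample vertices. The driving intuition is that misplacing a sample vertex $v\in S^*_i$ into $\S_j$ (with $\sigma(j)\neq i$) pays, within the sample, roughly $|S^*_i|+|S^*_j|$ minus $v$'s own error mass under $\C^*$; summed over misplaced vertices this cannot exceed the budget $O(\gamma t^2)$, which forces few misplacements once $\gamma\le c\beta^6$. Small optimal clusters require extra care since misplacements among them are individually cheap: here I would adapt the Giotis--Guruswami style argument that near-optimal clusterings are essentially unique up to label permutation and up to merging or splitting clusters of size $o(\beta^3 t)$, and absorb the latter freedom into the renaming.

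Combining the three components yields the theorem. After renaming so that $\tilde S_j$ corresponds to $C^*_j$, $\tilde S_j$ differs from $S^*_j$ only on an exceptional set $M\subseteq S$ with $|M|\le O(\beta^3 t)$. This perturbation changes $\tdeg_-(v,j)$ and $\tdegout_+(v,j)$ by at most $|M|$, so scaling by $n/|S|=n/t$ shifts $\tcost(v,j)$ by at most $O(\beta^3 n)$; the Hoeffding term from (iii) contributes a further $O(n\beta^{4.5})$. Both are dominated by $\beta n$. The exhaustive search over $k^t$ partitions of $S$ runs in $k^t = n^{O(\beta^{-9}\log k)}$ time, and evaluating $\tdeg_-(v,j)$ and $\tdegout_+(v,j)$ for all $v$ touches $O(n\log(n)\beta^{-9})$ edges, matching the stated bounds. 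A union bound over the good events completes the proof.
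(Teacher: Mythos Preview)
Your three-part decomposition matches the paper's structure closely: (i) corresponds to Lemmas~\ref{lem:deg in S and C equal} and~\ref{lem:cost_induced_partition_on_S}, (ii) to Lemma~\ref{lem:close in V}, and the combination step to the final unnamed lemma. The running-time and query-count arguments are fine.

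There are, however, two real problems in (i)(b) and (ii). First, the claim that the induced cost of $S^*$ on the sample is ``within a constant factor of $\gamma t^2$'' is not correct: the sampling deviation is of order $t^{3/2}\sqrt{\log n}$, which for $t=c\beta^{-9}\log n$ is $\Theta(\beta^{4.5}t^2)$ and can dominate $\gamma t^2$ when $\gamma$ is tiny. The correct bound (and all you need) is $O(\delta t^2)$ with $\delta=\Theta(\beta^3)$; this is exactly what the paper proves by first establishing Lemma~\ref{lem:deg in S and C equal} and then summing.

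Second, and more seriously, your structural argument in (ii) is circular. You write that a misplaced vertex $v\in S^*_i\cap \tilde S_{\sigma^{-1}(j)}$ ``pays, within the sample, roughly $|S^*_i|+|S^*_j|$''. But the cost that $v$ contributes to $\tilde\S$ is measured against the clusters of $\tilde\S$, not of $S^*$; if many vertices are misplaced together, each one's contribution can be far smaller than $|S^*_i|+|S^*_j|$. (Extreme case: swapping the contents of two clusters misplaces everything yet changes the cost not at all.) The paper sidesteps this by \emph{not} arguing about per-vertex costs at all. Instead it observes that since both $S^*$ and $\tilde\S$ have cost $O(\delta t^2)$ on $G|_S$, the symmetric difference of their \emph{edge sets} $E(S^*)\triangle E(\tilde\S)$ is $O(\delta t^2)$ by the triangle inequality, and then proves a purely combinatorial statement (Lemma~\ref{lem:close in V}): any two $k$-clusterings whose edge sets differ in at most $\delta t^2$ pairs must, after relabeling, agree on all but $O(\delta^{1/3}t)$ vertices per cluster. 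The proof picks, for each cluster $C$ of one partition, the cluster $D$ of the other maximizing $|C\cap D|$, and notes that every pair in $(C\cap D)\times(C\setminus D)$ lies in the edge symmetric difference. This avoids the circularity entirely. Note also that the resulting misplacement bound is $O(\delta^{1/3}t)=O(\beta t)$, not your claimed $O(\beta^3 t)$; fortunately $O(\beta t)$ is exactly what is needed, since after scaling by $n/t$ it contributes $O(\beta n)$, and the Hoeffding term $O(\beta^{4.5}n)$ is lower order.
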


The claim regarding the time and query complexity of the algorithm are trivial.  Indeed, the time is dominated by exhaustively searching the space of $k$-clusterings of the sample $S$ in the algorithm. We focus on proving the correctness.
We need some more definitions.

\begin{definition}
Let $u,v \in V$,  $S$ a multi-subset of $V, j\in [k]$ and $\delta > 0$.
Let
$\deg_+^S(v,j) = \sum_{u \in (C^*_j\cap S) \setminus \{v\}} \one_{u+v}$, $\deg_-^S(v,j) = \sum_{u \in (C^*_j\cap S)\setminus\{v\}} \one_{u-v}$, $\degout_+^S(v,j) = \sum_{u \in (S\setminus C^*_j)\setminus\{v\}} \one_{u+v}$ and 
$\degout_-(v,j) = \sum_{u \in (S\setminus C^*_j)\setminus\{v\}} \one_{u-v}$,
where the summations take multiplicities in $S$ into account.
Let $\cost^{*S}(v) \eqdef \deg_-^S(v,j^*(v))+\degout_+^S(v,j^*(v))$. 
\end{definition}

In what follows, set $\delta=\Theta(\beta^3)$. 
Define $\S$ as the partition of $S$  (from Algorithm~\ref{alg:cost_add_aprx}) induced by $\C^*$. That is
$\S=\{S_1,\ldots,S_k\}$ where $S_j=C^*_j \cap S$.

\begin{lemma} \label{lem:deg in S and C equal}
With probability at least $1-n^{-10}$, 
for all $v \in V$ and $j\in[k]$,
\begin{eqnarray*}
\max\{& & |\deg_+(v,j)/n - \deg_+^S(v,j)/|S||, 
|\deg_-(v,j)/n - \deg_-^S(v,j)/|S||, \\
& & |\degout_+(v,j)/n - \degout_+^S(v,j)/|S||,
|\degout_-(v,j)/n - \degout_-^S(v,j)/|S||\} = O(\delta) \ .
\end{eqnarray*}
\end{lemma}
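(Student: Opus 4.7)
The plan is to prove this lemma as a standard uniform Chernoff--Hoeffding bound combined with a union bound over the $nk$ pairs $(v,j)$ and the four quantity types. The key observation is that for each fixed $v$ and $j$, every sample-based quantity is a sum of $t=|S|$ i.i.d.\ Bernoulli indicators whose expectation matches the corresponding population fraction exactly, so each is an unbiased estimator of its population counterpart normalized by $n$ instead of $t$.

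Fix $v\in V$ and $j\in[k]$, and observe that $\deg_+^S(v,j)=\sum_{i=1}^{t}\one[v_i\in C^*_j]\cdot\one[v_i\neq v]\cdot\one_{v_i+v}$ is a sum of $t$ independent Bernoulli variables, each with mean exactly $\deg_+(v,j)/n$; indeed, the three indicators jointly pick out the set $\{u\in C^*_j\setminus\{v\}:(u,v)\in E\}$, whose size divided by $n$ is precisely the probability that a single uniform draw from $V$ contributes to the sum. Hoeffding's inequality then yields $\Pr\bigl[\abs{\deg_+^S(v,j)/t-\deg_+(v,j)/n}>\delta\bigr]\leq 2\exp(-2\delta^2 t)$. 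Plugging in $t=c\log(n)\beta^{-9}$ and $\delta=\Theta(\beta^3)$, the exponent becomes $\Omega(\beta^{6}\cdot\log(n)\,\beta^{-9})=\Omega(\log(n)/\beta^{3})$, so by choosing the universal constant $c$ in the sample size sufficiently large the failure probability is at most $n^{-12}$. The identical argument applies verbatim to the three remaining quantities $\deg_-^S(v,j)$, $\degout_+^S(v,j)$, and $\degout_-^S(v,j)$, each being a sum of i.i.d.\ Bernoullis with the matching mean.

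A union bound over all $v\in V$, all $j\in[k]$, and the four quantity types controls at most $4nk\leq 4n^{2}$ events (assuming $k\leq n$, which is WLOG for a meaningful clustering), so the total failure probability is at most $4n^{2}\cdot n^{-12}\leq n^{-10}$, giving the stated conclusion. The argument is essentially routine; the only point requiring care is the handling of the multiset $S$: since sampling is with replacement, $v$ itself may coincide with several of the draws $v_i$, but the sample-based sums explicitly exclude such coincidences term by term via the $\one[v_i\neq v]$ factor, so the Bernoulli indicators remain unbiased and the independence across draws is unaffected. The main calibration is simply choosing the absolute constant $c$ large enough that the Hoeffding exponent $\Omega(\log n/\beta^{3})$ beats the union bound over $O(n^{2})$ events with polynomial slack; this is precisely the role of the factor $\beta^{-9}$ in the sample size, which dominates the $\delta^{-2}=\Theta(\beta^{-6})$ needed by Hoeffding and leaves a $\beta^{-3}$ worth of margin for the $\log n$ factor from the union bound.
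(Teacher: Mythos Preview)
Your proof is correct and follows essentially the same approach as the paper: the paper invokes the standard sampling principle that for a collection of $M$ subsets $V_1,\dots,V_M\subseteq V$ and a uniform sample $T$ of size $N$ with replacement, with probability $1-\eta$ all empirical fractions $|V_i\cap T|/|T|$ are within $O(\sqrt{N^{-1}\log(M/\eta)})$ of $|V_i|/n$, which is exactly Hoeffding plus a union bound applied to the relevant $O(nk)$ subsets. Your explicit unpacking of the Bernoulli structure, the handling of the $v_i=v$ exclusion, and the check that $t=c\beta^{-9}\log n$ with $\delta=\Theta(\beta^3)$ gives exponent $\Omega(\beta^{-3}\log n)$ are all correct and match the paper's (terser) argument.
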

The simple proof is deferred to Appendix~\ref{sec:proof:lem:deg in S and C equal}.
From Lemma~\ref{lem:deg in S and C equal},
\begin{lemma}\label{lem:cost_induced_partition_on_S}
Assume $\gamma = o(\delta)$.
With probability $1-n^{-10}$, the cost of the partition $\S$ on the graph $G|_S = (S, E|_S)$ is at most $O(\delta|S|^2)$.
\end{lemma}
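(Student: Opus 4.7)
The plan is to bound the in-sample cost of the partition $\S$ by writing it as a sum over vertices of $S$ of their individual in-sample contributions, and then using the sampling accuracy from Lemma~\ref{lem:deg in S and C equal} to compare these to the population contributions, which in turn are controlled by the assumption $\gamma = o(\delta)$.

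First I would observe the identity
\begin{equation*}
\cost_{G|_S}(\S) = \tfrac{1}{2}\sum_{v \in S}\bigl(\deg_-^S(v, j^*(v)) + \degout_+^S(v, j^*(v))\bigr) = \tfrac{1}{2}\sum_{v \in S} \cost^{*S}(v),
\end{equation*}
which just accounts for each disagreeing pair from both endpoints. Next I would condition on the high-probability event of Lemma~\ref{lem:deg in S and C equal}, which gives, uniformly in $v \in V$,
\begin{equation*}
\cost^{*S}(v) \le \frac{|S|}{n}\,\cost^*(v) + O(\delta|S|).
\end{equation*}
Summing this bound over $v \in S$ yields
\begin{equation*}
\sum_{v\in S}\cost^{*S}(v) \le \frac{|S|}{n}\sum_{v\in S}\cost^*(v) + O(\delta |S|^2).
\end{equation*}

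The remaining task is to control $\sum_{v \in S}\cost^*(v)$. Since $S$ is a multiset of i.i.d.\ uniform draws from $V$, the random variables $\cost^*(v_1),\dots,\cost^*(v_{|S|})$ are i.i.d., each bounded by $n$ and with mean $\tfrac{1}{n}\sum_{u\in V}\cost^*(u) = 2\gamma n$. A standard Hoeffding bound gives, with probability $\ge 1-n^{-10}$,
\begin{equation*}
\sum_{v\in S}\cost^*(v) \le 2\gamma n|S| + O\!\left(n\sqrt{|S|\log n}\right).
\end{equation*}
Since $|S| = \Theta(\log n / \beta^9) = \Theta(\log n / \delta^3)$ with $\delta = \Theta(\beta^3)$, we have $n\sqrt{|S|\log n} = o(\delta n|S|)$, so plugging back in gives
\begin{equation*}
\frac{|S|}{n}\sum_{v\in S}\cost^*(v) = O(\gamma|S|^2) + o(\delta|S|^2) = O(\delta|S|^2),
\end{equation*}
using $\gamma = o(\delta)$. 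Combining, $\cost_{G|_S}(\S) = O(\delta|S|^2)$, and a union bound over the two high-probability events yields the stated $1-n^{-10}$ probability (absorbing constants into the $O$).

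The proof is mostly bookkeeping; the only mildly delicate step is confirming that the sample size $t = c\log n / \beta^9$ is large enough for Hoeffding to produce a deviation of $o(\delta n|S|)$, which it is precisely because $|S|$ is chosen as $\Theta(\log n /\delta^3)$ rather than, say, $\Theta(\log n / \delta^2)$. Everything else follows by linearity and the uniform per-vertex bound inherited from Lemma~\ref{lem:deg in S and C equal}.
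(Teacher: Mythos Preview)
Your proof is correct and matches the paper's intent (the paper merely writes ``From Lemma~\ref{lem:deg in S and C equal}'' and omits the details). One small observation: the separate Hoeffding step you use to control $\sum_{v\in S}\cost^*(v)$ is not actually needed. Swapping the order of summation gives the identity
\[
\sum_{v\in S}\cost^*(v)\;=\;\sum_{u\in V}\cost^{*S}(u),
\]
and a second application of Lemma~\ref{lem:deg in S and C equal} (now summed over $u\in V$) yields $\sum_{u\in V}\cost^{*S}(u)=\tfrac{|S|}{n}\cdot 2\gamma n^2 + O(\delta n|S|)$ already on the single event of that lemma. This avoids the second probabilistic event and the union-bound caveat, and is presumably what the paper's one-line pointer has in mind; your route via Hoeffding is equally valid and, as you note, is exactly where the choice $|S|=\Theta(\log n/\delta^3)$ pays off.
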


In the following lemma we show that any pair of clusterings that are close w.r.t.\ to their edges are also close w.r.t.\ their vertices.

\begin{lemma} \label{lem:close in V}
Let $\S,\tilde{\S}$ be two $k$-clusterings of $S$,
and let $E(\S), E(\tilde \S)$ be their corresponding edge sets,
namely, $(u,v)\in E(\S)$ if and only if $u \equiv_\S v$,
and similarly for $\tilde S$.
Assume the size of the symmetric difference between
$E(\S)$ and $E(\tilde \S)$ is at most $\delta |S|^2$,
where $\delta < c/k^3$ and $c$ is a 
sufficiently small constant. Then for some reordering of indices, for every $j \in [k]$, 
$\max \{|S_j \setminus \tilde{S_j}|, |\tilde{S}_j \setminus S_j| \}= O(\delta^{1/3} |S|)\ .$
\end{lemma}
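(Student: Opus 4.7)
The plan is to analyze the $k\times k$ confusion matrix $A=(a_{j,\tilde j})$ with $a_{j,\tilde j}=|S_j\cap \tilde S_{\tilde j}|$, and show that it is forced to be nearly a permutation matrix in the appropriate sense. Write $m=|S|$, $s_j=|S_j|=\sum_{\tilde j}a_{j,\tilde j}$ and $\tilde s_{\tilde j}=|\tilde S_{\tilde j}|=\sum_j a_{j,\tilde j}$. First I would translate the edge symmetric difference bound into algebra. Since pairs that are together in $\S$ but separated in $\tilde\S$ (and vice versa) exhaust $E(\S)\triangle E(\tilde\S)$, I get
\[
\sum_j\binom{s_j}{2}-\sum_{j,\tilde j}\binom{a_{j,\tilde j}}{2}\le \delta m^2,\qquad \sum_{\tilde j}\binom{\tilde s_{\tilde j}}{2}-\sum_{j,\tilde j}\binom{a_{j,\tilde j}}{2}\le \delta m^2.
\]
Expanding $s_j^2=(\sum_{\tilde j}a_{j,\tilde j})^2=\sum_{\tilde j}a_{j,\tilde j}^2+2\sum_{\tilde j<\tilde j'}a_{j,\tilde j}a_{j,\tilde j'}$ (and similarly for $\tilde s_{\tilde j}^2$), these rewrite as the ``row bound'' $\sum_j\sum_{\tilde j<\tilde j'}a_{j,\tilde j}a_{j,\tilde j'}\le \delta m^2$ and the analogous ``column bound''. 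In particular, within any fixed row or column, every pairwise product of distinct entries is at most $\delta m^2$.

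Next, call an index $j$ \emph{large} if $s_j\ge \delta^{1/3}m$, otherwise \emph{small}, and likewise on the $\tilde\S$ side. For each large $j$, define $\sigma(j)=\arg\max_{\tilde j}a_{j,\tilde j}$. Using $\sum_{\tilde j}a_{j,\tilde j}^2\le a_{j,\sigma(j)}\cdot s_j$ together with the row bound gives
\[
s_j^2\le a_{j,\sigma(j)}\cdot s_j+2\delta m^2\ \implies\ a_{j,\sigma(j)}\ge s_j-2\delta m^2/s_j\ge s_j-2\delta^{2/3}m,
\]
so $|S_j\setminus \tilde S_{\sigma(j)}|\le 2\delta^{2/3}m$. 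Injectivity of $\sigma$ on large indices follows from the column bound: if two distinct large $j_1,j_2$ satisfy $\sigma(j_1)=\sigma(j_2)=\tilde j$, then (for small enough $\delta$) $a_{j_1,\tilde j},a_{j_2,\tilde j}\ge s_{j_1}/2,s_{j_2}/2\ge \delta^{1/3}m/2$, so their product is at least $\delta^{2/3}m^2/4>\delta m^2$, contradicting the column bound. By the symmetric argument, each large $\tilde j$ has a unique dominant row index, and since any competitor to $a_{j,\sigma(j)}$ inside column $\sigma(j)$ has size at most $\delta m^2/a_{j,\sigma(j)}\ll a_{j,\sigma(j)}$, the column-side map is the inverse of $\sigma$ on large indices. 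Hence $\sigma$ matches large $S$-clusters bijectively to large $\tilde\S$-clusters; in particular the numbers of small clusters on the two sides agree, and I extend $\sigma$ to a bijection on $[k]$ by any pairing of small $S$-indices with small $\tilde\S$-indices.

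Finally, I would verify both directional bounds under this $\sigma$. For large $j$: $|S_j\setminus\tilde S_{\sigma(j)}|\le 2\delta^{2/3}m$ as shown, and symmetrically $|\tilde S_{\sigma(j)}\setminus S_j|=\tilde s_{\sigma(j)}-a_{j,\sigma(j)}\le 2\delta m^2/\tilde s_{\sigma(j)}=O(\delta^{2/3}m)$, using that $\tilde s_{\sigma(j)}\ge a_{j,\sigma(j)}\ge s_j/2\ge\delta^{1/3}m/2$. For small $j$ paired with small $\tilde j$: both $|S_j\setminus \tilde S_{\sigma(j)}|\le s_j<\delta^{1/3}m$ and $|\tilde S_{\sigma(j)}\setminus S_j|\le \tilde s_{\sigma(j)}<\delta^{1/3}m$. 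The worst of these is $O(\delta^{1/3}m)$, as required. The step I expect to require the most care is the mutual-inverse property between the row-side and column-side dominant maps; this is where the smallness hypothesis on $\delta$ is genuinely used, since one must rule out cross-matchings where a large row cluster and a large column cluster overlap substantially without being paired. The role of the $\delta<c/k^3$ hypothesis, beyond ensuring $\delta^{1/3}<1$ with room to spare, is to guarantee that the accumulated mismatch $\sum_j|S_j\setminus\tilde S_{\sigma(j)}|\le k\cdot O(\delta^{1/3}m)=o(m)$ (used downstream); for the per-cluster statement of the lemma the constant-$\delta$ smallness suffices.
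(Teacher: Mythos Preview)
Your argument is correct and complete in spirit; the confusion-matrix bookkeeping with the row/column product bounds $\sum_{\tilde j<\tilde j'}a_{j,\tilde j}a_{j,\tilde j'}\le\delta m^2$ is a clean way to force each large row (column) to have a single dominant entry. The one place that needs a touch more care is exactly the step you flag: the assertion that $\sigma$ maps large $\S$-clusters \emph{onto} large $\tilde\S$-clusters. From $a_{j,\sigma(j)}\ge s_j-2\delta^{2/3}m$ you only get $\tilde s_{\sigma(j)}\ge \delta^{1/3}m-2\delta^{2/3}m$, which may fall just below the threshold; symmetrically, $\tilde\sigma$ may land just below on the other side. This is a standard boundary nuisance and is fixed either by taking asymmetric thresholds on the two sides, or---more simply---by extending the injective map $\sigma|_{L}$ arbitrarily to a bijection of $[k]$ and observing that any unmatched large $\tilde j$ would force, via $\sigma(\tilde\sigma(\tilde j))=\tilde j$, a contradiction with injectivity once the two-sided counting is done with the relaxed threshold. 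With that patch the proof goes through, and in fact yields the sharper $O(\delta^{2/3}m)$ for large--large pairs.

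The paper's proof takes a different route. It first proves the structural claim that for every cluster $C$ of $\S$ there is a cluster $D$ of $\tilde\S$ with $|C\setminus D|=O(\delta^{1/3}|S|)$, via the single inequality $|C\cap D|\cdot|C\setminus D|\le\delta|S|^2$ (counting pairs in $(C\cap D)\times(C\setminus D)$) and then ruling out the ``small $|C\cap D|$'' root of the quadratic using $\delta<c/k^3$ together with the pigeonhole bound $|C\cap D|\ge|C|/k$. It then builds a bipartite digraph where each cluster points to its best match on the other side, calls mutual pairs \emph{good matches}, and shows that any cluster in a \emph{bad match} must itself have size $O(\delta^{1/3}|S|)$ by a two-step chase $C\to D\to C'$ with $C\cap C'=\emptyset$. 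Your approach replaces this good/bad dichotomy and chase with the algebraic row/column product bounds, which has two mild advantages: it yields $O(\delta^{2/3}|S|)$ for the large matches, and it only needs $\delta$ below an absolute constant for the per-cluster bound (you are right that the $c/k^3$ hypothesis is used in the paper's claim to force $|A|\ge|C|/k$, whereas your argument does not need $k$ there). The paper's argument, on the other hand, sidesteps the threshold-matching subtlety entirely via the good/bad match device.
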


We will only present a main structural claim used by the proof.
The remainder of the proof will be deferred to Appendix~\ref{sec:proof:lem:close in V}.
\begin{proof}
We start with an auxilary claim showing that every cluster in $\S$ has a similar cluster in $\tilde{\S}$  (and vice versa).

\begin{claim} \label{clm:aux close in V}
Let $C$ be a cluster of $\S$. There exists some cluster $D$ in $\tilde{S}$ such that $|C \setminus D| \leq O(\delta^{1/3}|S|)$.
\end{claim}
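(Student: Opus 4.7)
The plan is to split into two cases depending on the size of $C$. If $|C| \leq \delta^{1/3}|S|$ the claim is immediate: any cluster $D$ of $\tilde{\S}$ will satisfy $|C\setminus D| \leq |C| \leq \delta^{1/3}|S|$. So I will focus on the case $|C| > \delta^{1/3}|S|$.

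In the nontrivial case, I would partition $C$ according to $\tilde{\S}$, writing $C = \bigsqcup_{j\in[k]}(C\cap \tilde S_j)$, and take $D\in\tilde{\S}$ to be the part maximizing $|C\cap \tilde S_j|$; set $m = |C\cap D|$. The key observation is that any unordered pair $\{u,v\}\subset C$ whose endpoints lie in different clusters of $\tilde{\S}$ is in $E(\S)\setminus E(\tilde{\S})$ (it is in $E(\S)$ since $u\equiv_{\S} v$, and it is not in $E(\tilde{\S})$ since $u,v$ are split by $\tilde{\S}$), so such pairs are counted in the symmetric difference, whose total cardinality is assumed to be at most $\delta|S|^2$.

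I would then count these split-within-$C$ pairs as
\[
\binom{|C|}{2} - \sum_{j\in[k]}\binom{|C\cap \tilde S_j|}{2} \;\geq\; \frac{|C|(|C|-m)}{2} \;=\; \frac{|C|\cdot |C\setminus D|}{2},
\]
using the elementary inequality $\sum_j x_j^2 \leq m\sum_j x_j = m|C|$ with $x_j = |C\cap \tilde S_j|$. Combining this with the symmetric difference bound yields $|C\setminus D| \leq 2\delta |S|^2/|C| < 2\delta^{2/3}|S|$, which is $O(\delta^{1/3}|S|)$ since $\delta < c/k^3 \leq 1$ forces $\delta^{2/3}\leq \delta^{1/3}$.

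The only real content is the averaging step showing that a single $\tilde{\S}$-cluster must absorb most of $C$; the choice of the threshold $|C| = \delta^{1/3}|S|$ is what makes both cases land on the same bound $O(\delta^{1/3}|S|)$, and is the only point where some care with constants is needed. No additional machinery from the paper is used at this stage; the hypothesis $\delta < c/k^3$ is not yet tight here and will presumably be exploited only when Claim~\ref{clm:aux close in V} is iterated over all clusters of $\S$ in the remainder of the proof of Lemma~\ref{lem:close in V}.
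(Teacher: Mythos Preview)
Your proof is correct and actually slightly slicker than the paper's. Both arguments begin the same way: dispose of the trivial case $|C|\le\delta^{1/3}|S|$, then pick $D$ to be the $\tilde{\S}$-cluster with largest intersection with $C$. The difference lies in which cross pairs are counted. The paper only counts pairs in $(C\cap D)\times(C\setminus D)$, obtaining $m(|C|-m)\le\delta|S|^2$; this inequality does not by itself bound $|C|-m$ (since $m$ could be tiny), so the paper solves the quadratic $m(|C|-m)\le\delta^{1/3}|C|^2$, gets the dichotomy $m=O(\delta^{1/3}|C|)$ or $|C|-m=O(\delta^{1/3}|C|)$, and then invokes $\delta<c/k^3$ together with the maximality of $m$ (which forces $m\ge|C|/k$) to kill the first branch. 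You instead count \emph{all} $\tilde{\S}$-split pairs inside $C$, which via $\sum_j x_j^2\le m|C|$ yields directly $\tfrac{|C|(|C|-m)}{2}\le\delta|S|^2$ and hence $|C\setminus D|\le 2\delta^{2/3}|S|$. Your route avoids the quadratic case split, does not use the assumption $\delta<c/k^3$ at this stage (as you correctly note), and in fact lands on the sharper bound $O(\delta^{2/3}|S|)$ before you relax it to match the claimed $O(\delta^{1/3}|S|)$.
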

\begin{proof}
Let $D$ be a cluster in $\tilde{S}$ that maximizes $|D \cap C|$. Let $A = D \cap C$ and let $\bar{A} = C \setminus A$.
Notice that for every pair $(u,v)\in A\times  \bar{A}$ the edge $(u,v)$ is an element of  $E({\S})\setminus E(\tilde \S)$.  
Hence,
$ |A||\bar{A}| \leq \delta |S|^2 $.
If $|C|<\delta^{1/3}|S|$ then the claim holds trivially. If $|C|\geq \delta^{1/3}|S|$, then we get: 
$ |A|(|C|-|A|) = |A||\bar{A}| \leq \delta |S|^2 \leq \delta^{1/3} |C|^2$.
A simple calculation will show that either $|A|\leq O(\delta^{1/3} |C|)$ or $|A| \geq |C|(1-O(\delta^{1/3}))$. By setting the constant $c$ to be sufficiently small we get that the first option implies $|A|<|C|/k$ which is impossible due to the fact that $A$ maximizes $|D\cap C|$ over all clusters $D$ in $\tilde \S$. definition of $A$. We conclude that $ |C \setminus D| = O(\delta^{1/3} |S|)$,  
proving the claim.
The remainder of the proof of Lemma \ref{lem:close in V} is deferred to Appendix~\ref{sec:proof:lem:close in V}.

\end{proof}
\end{proof}

Let $\tilde{\S}=\tilde{S}_1,\ldots,\tilde{S}_k$ be 
an optimal $k$-clustering of the induced input $G|_S$.
By Lemma~\ref{lem:cost_induced_partition_on_S},
we know that with probability at least $1-n^{-10}$
the cost of the solution $\tilde \S$ is at most $\delta |S|^2$.
By the triangle inequality, this implies  that the symmetric
difference between $E(\S)$ and $E(\tilde \S)$ is at most $O(\delta |S|^2)$.
Hence, we may apply Lemma~\ref{lem:close in V}
and assume that the clusters $S_1,\dots, S_k$ and $\tilde S_1,\dots \tilde S_k$ are aligned with each other.
Define:
$\tdeg_+(v,j) = \sum_{u \in \tilde S_j\setminus \{v\}} \one_{u+v}$,  $\tdeg_-(v,j) = \sum_{u \in  \tilde S_j\setminus \{v\}} \one_{u-v}$, $\tdegout_+(v,j) = \sum_{u \in (S \setminus\tilde S_j)\setminus \{v\}} \one_{u+v}$, and  
$\tdegout_-(v,j) = \sum_{u \in  (S\setminus \tilde S_j)\setminus \{v\}} \one_{u-v}$.


\begin{lemma}
With probability at least $1-n^{-8}$, for all $v\in V$ and $j \in [k]$,
$ \abs{\frac {\tdeg_+(v,j)}{|S|} - \frac{ \deg_+(v,j)}{n}} = O(\delta^{1/3}) .$
The same is true for the other `deg functions'.
\end{lemma}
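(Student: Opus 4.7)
The plan is to chain together the two earlier lemmas via triangle inequalities, passing through the intermediate quantity $\deg_+^S(v,j)$. Specifically, we will estimate
\[
\abs{\frac{\tdeg_+(v,j)}{|S|} - \frac{\deg_+(v,j)}{n}} \leq \abs{\frac{\tdeg_+(v,j)}{|S|} - \frac{\deg_+^S(v,j)}{|S|}} + \abs{\frac{\deg_+^S(v,j)}{|S|} - \frac{\deg_+(v,j)}{n}},
\]
where the second term is already $O(\delta)$ by Lemma~\ref{lem:deg in S and C equal}. The work is therefore to bound the first term by $O(\delta^{1/3})$.

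First I would condition on the joint success of the two probabilistic statements we have so far, namely Lemma~\ref{lem:deg in S and C equal} and Lemma~\ref{lem:cost_induced_partition_on_S}; a union bound makes both hold with probability at least $1 - 2n^{-10} \geq 1 - n^{-8}$. Given these events, Lemma~\ref{lem:cost_induced_partition_on_S} says that $\S$ is a $k$-clustering of $G|_S$ of cost $O(\delta |S|^2)$, and since $\tilde \S$ is \emph{optimal} for $G|_S$, its cost is no larger. By the triangle inequality for symmetric differences of edge sets, $|E(\S)\triangle E(\tilde \S)| = O(\delta|S|^2)$.

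Next I would invoke Lemma~\ref{lem:close in V} on the pair $(\S,\tilde \S)$. This yields, after the appropriate renaming of cluster indices (which is consistent with the renaming already fixed before the statement of the lemma), the bound $\max\{|S_j\setminus \tilde S_j|,|\tilde S_j\setminus S_j|\} = O(\delta^{1/3}|S|)$ for every $j\in[k]$. Since $S_j = C^*_j\cap S$, the symmetric difference $\tilde S_j \triangle (C^*_j\cap S)$ has size $O(\delta^{1/3}|S|)$. Therefore
\[
\tdeg_+(v,j) - \deg_+^S(v,j) = \sum_{u\in \tilde S_j\setminus S_j}\one_{u+v} - \sum_{u\in S_j\setminus \tilde S_j}\one_{u+v},
\]
whose absolute value is at most $|\tilde S_j\triangle S_j| = O(\delta^{1/3}|S|)$. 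Dividing by $|S|$ gives the required $O(\delta^{1/3})$ bound, and combining with Lemma~\ref{lem:deg in S and C equal} finishes the $\deg_+$ case. The cases of $\deg_-$, $\degout_+$ and $\degout_-$ are completely analogous, using $\one_{u-v}$ in place of $\one_{u+v}$ and/or the complements $S\setminus \tilde S_j$, $S\setminus S_j$ in place of $\tilde S_j$, $S_j$; the same vertex-level symmetric-difference bound controls all four.

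The main obstacle (already handled once we have Lemma~\ref{lem:close in V}) is the transfer of information from an edge-level closeness of the two clusterings of $S$ to a vertex-level closeness of their clusters; without this step one cannot compare sums indexed by $\tilde S_j$ to sums indexed by $C^*_j \cap S$. A minor technical point to watch is that Lemma~\ref{lem:close in V} requires $\delta < c/k^3$, so the parameter choice $\delta = \Theta(\beta^3)$ must be small enough relative to $k$; since $\beta$ is the accuracy parameter and we are free to assume $\beta$ is sufficiently small (with $k$ fixed), this is harmless.
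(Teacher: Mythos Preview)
Your proposal is correct and follows essentially the same approach as the paper: chain Lemma~\ref{lem:close in V} (to bound $|\tdeg_+(v,j)-\deg_+^S(v,j)|/|S|$) together with Lemma~\ref{lem:deg in S and C equal} (to bound $|\deg_+^S(v,j)/|S|-\deg_+(v,j)/n|$) via the triangle inequality and a union bound. The paper places the verification that Lemma~\ref{lem:close in V} applies (using Lemma~\ref{lem:cost_induced_partition_on_S} and optimality of $\tilde\S$) in the paragraph preceding the lemma, so its displayed proof is terser, but the content is the same.
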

\begin{proof}
By the guarantee of Lemma \ref{lem:close in V}, for all $v \in V, j \in [k]$, 
$\abs{\frac {\tdeg_+(v,j)}{|S|} - \frac {\deg^S_+(v,j)}{|S|}}=O(\delta^{1/3})$.
By the guarantee of Lemma~\ref{lem:deg in S and C equal}, we have that  for all $v\in V,j\in [k],$ 
$\abs{\frac{\deg^S_+(v,j)}{|S|} - \frac{\deg_+(v,j)}{n}}=O(\delta^{1/3})\ .$
The claim follows by union bounding and using the  triangle inequality.  This concludes the lemma's proof.
\end{proof}

\noindent
Theorem \ref{thm:cost_add_apx} is now an easy corollary.

\subsection{The PTAS}\label{sec:low2}

In this section we utilize the approximations to the costs of the vertices achieved in Algorithm~\ref{alg:cost_add_aprx} to achieve a PTAS for $k$-clustering. We note that the heart of our contribution is the previous section, and the lemmas and proofs here follow the lines of \cite{GiotisGuruswami06}. The main algorithm (Algorithm \ref{alg:main kcc}) is of course different since it utilizes the results of the previous section.

Throughout this section we will assume that the optimal clustering $\C^*$ has a cost of $\gamma n^2$ where 
$\gamma < c_1 \beta^6$, where the parameter $\beta$
will be taken as $c_2 \eps / k^2$, and $c_1, c_2$ will be sufficiently small constants so that Theorem~\ref{thm:cost_add_apx} is satisfied.



\begin{algorithm}  \label{alg:main kcc}
\caption{PTAS for $k$-CC (low cost)}
\label{alg:CC PTAS}
\emph{Input}: A graph $G=(V,E)$, an integer $k>1$ and a parameter $\eps>0$. It is assumed that the optimal $k$-CC cost of $G$ is $\gamma n^2$, where $\gamma < c_1\beta^6$ and $\beta = c_2\eps/k^3$.

\emph{Output}: A clustering $\tcC = \{ \tC_1,\ldots,\tC_k \}$ of $G$.

Run Algorithm~\ref{alg:cost_add_aprx} with inputs $G$,$k$ and $\beta$.  Obtain approximations $\tcost(v,j)$ for all $v\in V$ and $j\in[k]$. 

Create empty clusters $\hat C_1,\dots \hat C_k$. 
For all $v\in V$ add $v$ to $\hat C_i$, where $i=\mathrm{argmin}_j\{ \tcost(v,j)\}$. 

Reorder the clusters so that $|\hat C_1|\geq \ldots \geq |\hat C_k|$. Let $\ell \in [k]$ be such that $|\hat C_\ell| \geq \frac{n}{2k}$ and $|\hat C_{\ell+1}| < \frac{n}{2k}$ (if no such integer exists, set $\ell=k$).

Run the algorithm recursively on the restriction of $G$ on $W \eqdef \cup_{j > \ell} \hat{C}_j$, the integer $k-\ell$ and approximation parameter $\eps(1-1/k)$. Denote its output by $\tC_{\ell+1},\ldots,\tC_k$. 

Output $\tcC = ( \tC_1=\hat{C}_1,\ldots,\tC_{\ell}=\hat{C}_{\ell}, \tC_{\ell+1},\ldots,\tC_k)$\ .
\end{algorithm}	

The remainder of the secion is dedicated to proving Theorem~\ref{thm:PTAS low cost kcc}.  We need some lemmas.
In what follows, we assume that the invocation of
Algorithm~\ref{alg:cost_add_aprx} is successful
in the sense that  the guarantee of Theorem~\ref{thm:cost_add_apx} holds.
The following is an immediate corollary of this guarantee.
\begin{lemma} \label{lem:cost in other}
Let $v \in V$ be a vertex satisfying  $v \in \hat C_j\cap C^*_i$, where $i \neq j$. Then $\cost^*(v,j) \leq \cost^*(v)+2\beta n$\ .
\end{lemma}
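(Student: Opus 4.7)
The plan is to chain two applications of the additive approximation guarantee from Theorem~\ref{thm:cost_add_apx} with the defining property of the cluster assignment in Algorithm~\ref{alg:CC PTAS}. Conceptually, the argument is: the algorithm placed $v$ into $\hat C_j$ rather than into $\hat C_i$ precisely because the estimator $\tcost(v,j)$ beat $\tcost(v,i)$, and since both estimators are within $\beta n$ of the true quantities $\cost^*(v,j)$ and $\cost^*(v,i) = \cost^*(v)$, the true gap between $\cost^*(v,j)$ and $\cost^*(v)$ can be at most $2\beta n$.

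First I would unpack the assignment rule. Since $v \in \hat C_j$, the definition of $\hat C_j$ in Algorithm~\ref{alg:CC PTAS} says that $j = \operatorname{argmin}_{j' \in [k]} \tcost(v,j')$, so in particular
\[
\tcost(v,j) \;\leq\; \tcost(v,i).
\]
This is the only place the hypothesis $v \in \hat C_j$ is used; note that it does not matter that $i \neq j$ (the inequality would be trivial if $i=j$), and the hypothesis $i \neq j$ is really only needed to make the conclusion nontrivial.

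Next I would invoke the assumed success of Algorithm~\ref{alg:cost_add_aprx}, which gives, for every $v \in V$ and $j' \in [k]$,
\[
\bigl|\,\tcost(v,j') - \cost^*(v,j')\,\bigr| \;<\; \beta n.
\]
Applying this with $j' = j$ and with $j' = i$, and then chaining with the inequality from the previous paragraph,
\[
\cost^*(v,j) \;<\; \tcost(v,j) + \beta n \;\leq\; \tcost(v,i) + \beta n \;<\; \cost^*(v,i) + 2\beta n.
\]
Finally, since $v \in C^*_i$ we have $\cost^*(v,i) = \cost^*(v)$ by the definition at the start of Section~\ref{sec:low1}, yielding $\cost^*(v,j) \leq \cost^*(v) + 2\beta n$.

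There is no real obstacle here; the lemma is essentially a bookkeeping consequence of the fact that we optimized a $\beta n$-approximate surrogate instead of the true per-vertex cost, so the gap with the optimum choice of cluster for $v$ cannot exceed twice the per-estimate error. The only care needed is to keep the direction of each inequality consistent when passing between $\tcost$ and $\cost^*$.
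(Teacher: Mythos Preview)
Your argument is correct and is exactly the intended one: the paper states the lemma as an immediate corollary of the guarantee of Theorem~\ref{thm:cost_add_apx}, and your chaining of $\cost^*(v,j) < \tcost(v,j)+\beta n \leq \tcost(v,i)+\beta n < \cost^*(v,i)+2\beta n = \cost^*(v)+2\beta n$ is precisely that immediate deduction.
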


Define for any $v\in V$, $\tcost(v) = \min_{j\in[k]} \tcost(v,j)$, where
$\tcost(v,j)$ is as defined in Algorithm~\ref{alg:CC PTAS}.
Define $\Vcostly = \{v\in V \; |\; \tcost(v)  \geq c_3n/k^2 \}$, where $c_3$ is some sufficiently small constant.
For any $v\in \Vcostly$, $\cost^*(v) \geq \frac 1 2 c_3 n/k^2$ due
to the guarantee of Theorem~\ref{thm:cost_add_apx}
and our choice of $\beta$.
Since (twice) the total optimal cost is bounded by that incurred by vertices in $\Vcostly$:
\begin{equation}\label{Vcostlysize1} |\Vcostly| \leq 4 \gamma n^2 / (c_3 n/k^2) \leq (4\gamma n k^2)/{c_3}\
 .\end{equation}  
 In particular, using a very crude estimate, this means
\begin{equation}\label{Vcostlysize}
 |\Vcostly| \leq  c_4 n /k  ,
\end{equation}
where $c_4$ is a constant that can be made
sufficiently small by reducing $c_1$ as necessary. 
Recall that $\hat C_1,\ldots,\hat C_\ell$ are the large clusters found by Algorithm~\ref{alg:CC PTAS}. Notice
that since there are $k$ clusters, there must be at least one cluster of size $\geq \frac{n}{2k}$ meaning that $\ell \geq 1$.
\begin{lemma}\label{lem:Cstar almost hat C}
For any $j \in [\ell]$, $C^*_j \setminus \Vcostly = \hat C_j \setminus \Vcostly$\ .
\end{lemma}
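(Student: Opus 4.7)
\smallskip
The plan is to prove both inclusions $\hat C_j\setminus \Vcostly\subseteq C^*_j$ and $C^*_j\setminus \Vcostly\subseteq \hat C_j$ after renaming $\{C^*_1,\dots,C^*_k\}$: first align it with $\{\tilde S_1,\dots,\tilde S_k\}$ via Lemma~\ref{lem:close in V}, then apply the same size-based permutation that Algorithm~\ref{alg:CC PTAS} applies to $\hat C$. In this indexing it suffices to prove the equality for every $j$ with $|\hat C_j|\geq n/(2k)$. Throughout I condition on the good event of Theorem~\ref{thm:cost_add_apx}.

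The key preparatory step is: \emph{if $v\in \hat C_j\cap C^*_a\setminus \Vcostly$ with $a\neq j$, then $|C^*_a|=O((c_3/k^2+\beta)n)$.} Since $v\in \hat C_j\setminus \Vcostly$, $\tcost(v,j)=\tcost(v)<c_3n/k^2$, so by Theorem~\ref{thm:cost_add_apx} both $\cost^*(v,j)$ and $\cost^*(v)=\cost^*(v,a)$ are at most $(c_3/k^2+\beta)n$. Because $C^*_a\subseteq V\setminus C^*_j$, one has $\deg_+(v,a)\leq \degout_+(v,j)\leq \cost^*(v,j)$ and $\deg_-(v,a)\leq \cost^*(v)$; hence $|C^*_a|=1+\deg_+(v,a)+\deg_-(v,a)=O((c_3/k^2+\beta)n)$.

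Next I aggregate. Letting $X_a=\hat C_j\cap C^*_a\setminus \Vcostly$, the bound above gives $|X_a|\leq |C^*_a|=O((c_3/k^2+\beta)n)$ for every $a\neq j$ with $X_a\neq \emptyset$, so $\sum_{a\neq j}|X_a|=O(c_3n/k+k\beta n)$. Combined with $|\hat C_j\setminus \Vcostly|\geq n/(2k)-c_4n/k\geq n/(4k)$ (using $|\Vcostly|\leq c_4 n/k$ with $c_4\leq 1/4$), this forces $|X_j|=\Omega(n/k)$, and in particular $|C^*_j|\geq \Omega(n/k)$. With this lower bound I close both inclusions by contradiction. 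For $\hat C_j\setminus \Vcostly\subseteq C^*_j$: re-examining the hypothesized $v$ with $a\neq j$, since $v\notin C^*_j$, $\deg_+(v,j)\geq |C^*_j|-\cost^*(v,j)\geq |C^*_j|-(c_3/k^2+\beta)n$, while $\deg_+(v,j)\leq \degout_+(v,a)\leq \cost^*(v)<(c_3/k^2+\beta)n$, forcing $|C^*_j|=O((c_3/k^2+\beta)n)$ and contradicting $|C^*_j|=\Omega(n/k)$. The reverse inclusion $C^*_j\setminus \Vcostly\subseteq \hat C_j$ is symmetric: invoking Lemma~\ref{lem:cost in other} on a hypothesized $v\in C^*_j\cap \hat C_a\setminus \Vcostly$ with $a\neq j$ and repeating the two-sided bound on $\deg_\pm(v,j)$ yields $|C^*_j|=O((c_3/k^2+\beta)n)$, once more contradicting the $\Omega(n/k)$ lower bound.

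The main obstacle I anticipate is bookkeeping of constants rather than any conceptual difficulty: $c_1,c_3,c_4$ must be jointly chosen sufficiently small (with $\beta=c_2\eps/k^3$) so that $|\Vcostly|<n/(4k)$, so that $\sum_{a\neq j}|X_a|$ is strictly less than $|\hat C_j\setminus \Vcostly|$, and so that the derived upper bound $O((c_3/k^2+\beta)n)$ on $|C^*_j|$ stays strictly below the $\Omega(n/k)$ lower bound. Since the paper's assumption $\gamma<c_1\beta^6$ lets $c_1$ shrink (and hence $c_4$ shrinks via $|\Vcostly|\leq c_4 n/k$), all three conditions can be met simultaneously.
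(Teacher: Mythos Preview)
Your proof is correct and follows essentially the same approach as the paper: both arguments hinge on the inequality $|C^*_a|+|C^*_j|-1\leq \cost^*(v,a)+\cost^*(v,j)$ for a hypothetical misplaced non-costly vertex $v$ (you phrase it via the degree decomposition, the paper states it directly), then aggregate over $a\neq j$ to lower-bound $|C^*_j|$ by $\Omega(n/k)$ and reach a contradiction. One minor slip in justification: your bound $\cost^*(v)=\cost^*(v,a)\leq (c_3/k^2+\beta)n$ does not follow from Theorem~\ref{thm:cost_add_apx} alone (that theorem controls $\tcost(v,a)$ only when $a$ is the argmin, which here is $j$, not $a=j^*(v)$); you need the optimality of $\C^*$ to get $\cost^*(v)\leq\cost^*(v,j)$, exactly as the paper does explicitly.
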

The proof is deferred to Appendix~\ref{sec:proof:lem:Cstar almost hat C} for lack of space.
The next lemma states the existence of a clustering whose large clusters are identical to those found by our algorithm
and has an almost optimal cost.
\begin{lemma} \label{lem:large can recurs}
There exist some $k$-clustering of $V$, ${\cal D}=(D_1,\ldots, D_k)$ such that for all $j \in [\ell]$, $D_j=\hat C_j$ and $\cost({\cal D}) \leq \gamma n^2 (1+\eps/k)$ 
\end{lemma}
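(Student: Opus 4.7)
The plan is to construct an explicit clustering $\D$ satisfying the constraint $D_j=\hat C_j$ for $j\in[\ell]$ and then upper-bound $\cost(\D)$ relative to $\cost(\C^*)=\gamma n^2$ by a careful incremental accounting of the vertices we reassign. The key tool will be Lemma~\ref{lem:cost in other}, which limits the per-vertex cost increase when relocating any $v$ to its $\hat{\cal C}$-cluster.

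I would define $\D$ as follows. Stipulate $D_j:=\hat C_j$ for $j\in[\ell]$; for $v\in W$, set $D(v):=C^*(v)$ if $C^*(v)>\ell$ and $D(v):=j(v)$ (the $\hat{\cal C}$-cluster index of $v$, which is $>\ell$ since $v\in W$) if $C^*(v)\leq\ell$. This always yields a valid $k$-partition since no vertex $v\in W$ is placed into any of $D_1,\dots,D_\ell$. Let $B:=\{v\in V:D(v)\neq C^*(v)\}$. For $v\notin W$ we have $D(v)=j(v)\leq\ell$, so $v\in B$ implies $v\in\hat C_{j(v)}\setminus C^*_{j(v)}\subseteq\Vcostly$ by Lemma~\ref{lem:Cstar almost hat C}. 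For $v\in W$ with $C^*(v)\leq\ell$ we similarly get $v\in C^*_{C^*(v)}\setminus\hat C_{C^*(v)}\subseteq\Vcostly$, while $v\in W$ with $C^*(v)>\ell$ is not in $B$ at all. Hence $B\subseteq\Vcostly$, and by~(\ref{Vcostlysize1}), $|B|\leq 4\gamma nk^2/c_3$.

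Next, I would enumerate $B=\{v_1,\dots,v_m\}$ and consider the sequence $\C^{(0)}:=\C^*,\C^{(1)},\dots,\C^{(m)}=\D$ obtained by relocating $v_1,\dots,v_m$ one at a time from $C^*(v_t)$ to $D(v_t)$. The exact cost increment $\cost(\C^{(t)})-\cost(\C^{(t-1)})$ equals $\cost^{\C^{(t-1)}}(v_t,D(v_t))-\cost^{\C^{(t-1)}}(v_t)$, and it differs from $\cost^*(v_t,D(v_t))-\cost^*(v_t)$ by at most $2(t-1)$, because each previously-moved vertex $v_s$ ($s<t$) can change either of the two relevant degree counts for $v_t$ by at most $\pm1$. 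Telescoping yields
\begin{equation*}
\cost(\D)-\cost(\C^*)\ \leq\ \sum_{v\in B}\bigl[\cost^*(v,D(v))-\cost^*(v)\bigr]\ +\ |B|^2.
\end{equation*}
In both cases comprising $B$, the destination $D(v)$ is the $\hat{\cal C}$-cluster index $j(v)\neq C^*(v)$, so Lemma~\ref{lem:cost in other} gives $\cost^*(v,D(v))-\cost^*(v)\leq 2\beta n$. Combining, $\cost(\D)-\cost(\C^*)\leq 2\beta n|B|+|B|^2$; substituting $|B|\leq 4\gamma nk^2/c_3$, $\beta=c_2\eps/k^3$, and $\gamma\leq c_1\beta^6$ shows both terms are bounded by $\tfrac{1}{2}\gamma n^2\eps/k$ when the universal constants $c_1,c_2$ are chosen sufficiently small relative to $c_3$, whence $\cost(\D)\leq\gamma n^2(1+\eps/k)$.

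The main obstacle is that the naive bound $|\cost(\D)-\cost(\C^*)|\leq|B|\cdot n$ is far too weak (by a factor of roughly $k/\eps$). The crux is to exploit Lemma~\ref{lem:cost in other}---which encodes the near-optimality of Algorithm~\ref{alg:CC PTAS}'s choice of $\hat C_{j(v)}$ through its use of $\tcost$---to upgrade the per-move bound from $O(n)$ to $O(\beta n)$; the quadratic cross-interaction term $|B|^2$ is absorbed because the low-cost hypothesis $\gamma=O(\beta^6)$ forces $|B|$ to be a vanishing fraction of $n$.
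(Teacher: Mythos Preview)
Your proof is correct and follows essentially the same approach as the paper. The paper defines $\D$ by moving \emph{all} of $\Vcostly$ to their $\hat{\C}$-clusters (setting $D_i=(C^*_i\setminus\Vcostly)\cup(\hat C_i\cap\Vcostly)$) and bounds the cost change by $|\Vcostly|(2\beta n+|\Vcostly|)$ via a direct argument, whereas you move only the subset $B\subseteq\Vcostly$ actually needed to satisfy $D_j=\hat C_j$ for $j\le\ell$ and obtain the same bound $2\beta n|B|+|B|^2$ through a telescoping argument; the key ingredients---Lemma~\ref{lem:Cstar almost hat C} to confine the moved vertices to $\Vcostly$, Lemma~\ref{lem:cost in other} for the $2\beta n$ per-vertex term, and \eqref{Vcostlysize1} for the quadratic interaction term---are identical.
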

\begin{proof}
Take $\cal D$ to be the clustering defined as follows.
For any $i \in [k]$, 
$ D_i = (C^*_i \setminus \Vcostly) \cup (\hat C_i \cap \Vcostly)$.
That is, $D_i$ is the result starting with the clustering $\cal C$ and of moving the vertices of $\Vcostly$ to the clusters according to $\hat \C = \{\hat C_1,\dots, \hat C_k\}$. 

Denote by $\dcost(v)$ the cost of a vertex $v$ w.r.t.\ the partition $\cal D$. Notice that the only edges for which the clustering $\cal D$ pays for while the clustering $\C^*$ does not must be incident to a node in $\Vcostly$. Hence,
\negspace
\begin{equation}\label{costdiffbound1} \cost({\cal D}) - \cost({\cal C^*}) \leq\sum_{v \in \Vcostly} (\dcost(v)-\cost^*(v) )\ . \end{equation}
Assume $\vcostly \in \Vcostly\cap D_j$ for some $j\in[k]$.
\smallnegspace
Clearly
\begin{equation}\label{costdiffbound} \left |\dcost(\vcostly)-\cost^*(\vcostly,j)\right| \leq{|\Vcostly|}\ , 
\end{equation}
because the only difference in such a vertex's cost can  come from edges connecting it to other vertices in $\Vcostly$. 
Now assume $\vcostly \in \Vcostly \cap C^*_i\cap D_j$ for $j \neq i$. By construction,  $\vcostly \in \hat C_j$. By Lemma \ref{lem:cost in other}, this implies $\cost^*(\vcostly,j) \leq \cost^*(\vcostly)+2\beta$. By (\ref{costdiffbound}) we conclude
\smallnegspace
\begin{eqnarray*}
  \dcost(\vcostly)-\cost^*(\vcostly) \leq\ \ \ \ \ \ \ \ \ \ \ \ \ \ \ \  \ \ \ \ \ \ \ \     \ \ \ \ \ \ \ \    \ \ \ \ \ \ \ \  \\
(\cost^*(\vcostly,j)-\cost^*(\vcostly,i))+|\Vcostly| \leq 2\beta+|\Vcostly|\ .
\end{eqnarray*}
\noindent
Plugging this into (\ref{costdiffbound1}) and using (\ref{Vcostlysize1}), we get
\negspace
$$ \cost({\cal D}) - \cost({\C^*}) \leq |\Vcostly|  \left(2\beta n+|\Vcostly|\right) \leq  \gamma n^2 \left({8\beta k^2}/{(c_3)}+ {16\gamma k^4}/{(c_3)^2}\right)\ .$$ 
The claim follows since $\gamma < c_1 \beta^6 \leq \frac{\eps}{2k} \cdot  \frac{(c_3)^2}{16 k^4} $,  $\beta < \frac{\eps}{2k} \cdot  \frac{c_3}{8k^2}$ assuming small $c_1, c_2$.

\end{proof}

\begin{proof} [of Theorem \ref{thm:PTAS low cost kcc}]
The claim regarding the query complexity is trivial given Theorem \ref{thm:cost_add_apx}. The running time is a result of the recursion formula $T(n,\eps,k)=n^{\eps^{-9}k^{27} \log(k)}+T(n,\eps(1-1/k),k-1)=n^{\eps^{-9}k^{27} \log(k)}$.
We note that in \cite{GiotisGuruswami06}, the stated running time is doubly exponential in $k$ wheras here it is singly exponential in $k$. This difference is due to a minor observation that the recursive call should be with the parameter $\eps(1-1/k)$ rather than $\eps/10$. The same minor change would result in a singly exponential dependence in $k$ in the algorithm given in \cite{GiotisGuruswami06} as well.
Let $W$ be the union of the small clusters. That is, $W=\cup_{j=\ell+1}^k \hat C_j$.
By lemma \ref{lem:large can recurs}, all of the vertex pairs that are not contained in the set $W \times W$ incur a cost in $\hat C$ identical to that in $\cal D$.
Let $d_1$ be the cost of $\cal D$ on pairs in $W \times W$ and let $d_2$ be its cost on the remaining pairs $V\times V\setminus W\times W$. Since $W$ is clustered recursively, we have that the cost of $\hat C$ is at most $d_2+d_1(1+\eps/k) \leq (d_1+d_2)(1+\eps/k)=\cost({\cal D})(1+\eps/k)$. The statement of the theorem follows.

\end{proof}


\bignegspace
\section{Query Efficient PTAS for High Cost}\label{sec:high}
\negspace

We present a query efficient PTAS for the high loss case of MFAST. The query efficient PTAS for the high loss case of $k$-CC is almost identical and is thus not presented.
We will start by describing a known PTAS ((R1)+(R2)) based on an approach given by Arora et.\ al. We then show how to add requirement (R3).  The final approach is summarized in Algorithm~\ref{alg:high cost}, found in Appendix~\ref{sec:alg:high cost}.



\negspace
\subsection{ (R1)+(R2) using a Known Additive Approximation Algorithm} \label{sec:high cost PTAS}
Let $\pi^*$ denote an optimal permutation, and let $\OPT$ denote $\cost(\pi^*) = \cost(\pi^*)$.
In the high cost  MFAST case, as explained in Section~\ref{sec:overview}, we assume $\OPT \geq \gamma n^2$, where $\gamma = \Theta(\eps^2)$. 
Instead of directly solving MFAST, we solve the \emph{bucketed} version.  This idea is not new and can be found in e.g. \cite{MatSch2007}.  
An $m$-bucket ordering $\sigma$ of $V$ is a mapping $\sigma: V\mapsto [m]$, where for each $i\in[m]$ the preimage satisfies:
 $\frac {n}{2m}  \leq |\sigma^{-1}(i)|  \leq \frac{2n}m $.
For brevity we say that $u <_\sigma v$ if $\sigma(u) < \sigma(v)$, and $u \equiv_\sigma v$ if $\sigma(u) = \sigma(v)$.
We extend the definition of $\cost(\cdot)$ to bucketed orders by defining
$\cost(\sigma) \eqdef \sum_{u <_\sigma v} \one_{(v,u) \in A}\ .$
We will also need to define: 
$\cost^{u,v}(\sigma) \eqdef \one_{u <_\sigma v}\one_{(v,u)\in A} + \one_{v <_\sigma u}\one_{(u,v)\in A} $ and 
$\cost^u(\sigma) \eqdef \frac 1 2  \sum_{v\in V} \cost^{u,v}(\sigma)$, 
 so that $\cost(\sigma) = \sum_{u\in V} \cost_T^{u}(\sigma)$.
A permutation $\pi$ extends an $m$-bucketed ordering $\sigma$ if $u<_\pi v$ whenever $u<_\sigma v$.
\begin{observation}\label{obs:bucket}\cite{MatSch2007}
For any $\pi$ extending $\sigma$, $|\cost(\pi) - \cost(\sigma)| = O(n^2/m)$, hence for
the purpose of obtaining a $(1+\eps)$-approximate solution in our case it suffices to consider $m$-bucketed orderings
with $m = \Theta( 1/(\eps \gamma))$.
\end{observation}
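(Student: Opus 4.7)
The plan is to prove the observation by pair-by-pair decomposition of the cost. Writing
\[
\cost(\pi) - \cost(\sigma) \;=\; \sum_{\{u,v\}} \bigl(\cost^{u,v}(\pi) - \cost^{u,v}(\sigma)\bigr),
\]
I split the sum according to whether $u \equiv_\sigma v$ or not. For any pair in \emph{different} buckets of $\sigma$, the extension property ($u <_\pi v$ whenever $u <_\sigma v$) forces $\pi$ and $\sigma$ to agree on the relative order of $u$ and $v$, so these terms cancel. For any pair $u \equiv_\sigma v$ in the \emph{same} bucket, neither $u <_\sigma v$ nor $v <_\sigma u$ holds, so $\cost^{u,v}(\sigma) = 0$, while $\cost^{u,v}(\pi) \in \{0,1\}$ trivially. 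Therefore
\[
|\cost(\pi) - \cost(\sigma)| \;\leq\; \sum_{i=1}^m \binom{|\sigma^{-1}(i)|}{2}.
\]

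Using the bucket-size upper bound $|\sigma^{-1}(i)| \leq 2n/m$, each binomial is $O(n^2/m^2)$, and summing over the $m$ buckets yields $O(n^2/m)$, which proves the first assertion.

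For the corollary about choosing $m$, the natural candidate is the $m$-bucketed ordering $\sigma^*$ induced by $\pi^*$: split the positions $1,\ldots,n$ into $m$ consecutive blocks of size roughly $n/m$, and let $\sigma^*(v)$ be the block containing $\pi^*(v)$. This automatically satisfies the bucket-size bounds $n/(2m) \leq |(\sigma^*)^{-1}(i)| \leq 2n/m$, and $\pi^*$ extends $\sigma^*$. Hence the optimal $m$-bucketed order $\hat\sigma$ satisfies $\cost(\hat\sigma) \leq \cost(\sigma^*) \leq \OPT + O(n^2/m)$, and any extension $\hat\pi$ of $\hat\sigma$ satisfies $\cost(\hat\pi) \leq \cost(\hat\sigma) + O(n^2/m) \leq \OPT + O(n^2/m)$ by applying the first part again.

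Finally, since we are in the high-cost regime $\OPT \geq \gamma n^2$, choosing $m = \Theta(1/(\eps\gamma))$ makes the additive error $O(n^2/m) = O(\eps\gamma n^2) \leq \eps\,\OPT$, which gives the desired $(1+\eps)$-approximation. No step presents a real obstacle here; the only mild subtlety is verifying that $\sigma^*$ genuinely respects the definition of an $m$-bucket ordering, which reduces to the trivial fact that consecutive equal-size blocks of $[n]$ have size exactly $n/m$ (up to rounding), well within the allowed range $[n/(2m), 2n/m]$.
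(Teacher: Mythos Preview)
Your argument is correct and is the standard one. Note that the paper does not actually supply a proof of this observation; it simply attributes it to \cite{MatSch2007} and moves on, so there is no in-paper argument to compare against. Your pair-by-pair decomposition, together with the bucket-size bound $|\sigma^{-1}(i)|\le 2n/m$, is exactly the intended (and essentially only) way to see the $O(n^2/m)$ bound, and your derivation of $m=\Theta(1/(\eps\gamma))$ from $\OPT\ge \gamma n^2$ is the standard conclusion. One tiny cosmetic remark: in fact $\cost(\sigma^*)\le \cost(\pi^*)=\OPT$ outright (not merely $\OPT+O(n^2/m)$), since $\sigma^*$ only drops the within-bucket contributions; this does not affect the conclusion.
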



\noindent
Let $\sigma^*$ denote any $m$-bucketed  ordering of $V$ of which $\pi^*$ is an extension, and such that $\lfloor n/m \rfloor \leq | (\sigma^*)^{-1}(i) | \leq \lceil n/m\rceil$ for all $i\in [m]$.
The following approach has been taken in \cite{DBLP:journals/mp/AroraFK02}.  Let $S = (v_1,v_2,\dots, v_s)$ be a random series of $s=O(\log n/(\eps\gamma)^2)$ vertices in $V$,
each element chosen uniformly and independently, with repetitions.  Abusing notation, we will also think of $S$ as the series $\{v_1,\dots, v_s\}$.
  For each $m$-bucketed ordering $\sigma$ and for each $u\in V$, we make the following definitions:
$\cost^{u,S}(\sigma) \eqdef \frac {n} {2s} \sum_{i=1}^s \cost^{u,v_i}(\sigma)$ and 
$\cost^S(\sigma) \eqdef \sum_{u\in V} \cost^{u,S}(\sigma)$. 
Clearly $\cost^{S}(\sigma)$ is an unbiased estimator of $\cost(\sigma)$ over the choice of the sample $S$.
The top level of our algorithm will enumerate over all  $n^{\Theta(\log(1/(\eps \gamma))/(\eps\gamma)^2)}$ possibilities for the value of $(\sigma^*(v_1),\dots, \sigma^*(v_s))$.  From now on, we will assume the correct possibility has been chosen, so that $\sigma^*(v)$ is ``known'' for $v\in S$.  A verification
step will be used to identify the correct possibility in the end
(see Algorithm~\ref{alg:high cost} in Appendix~\ref{sec:alg:high cost}).

\begin{definition}
For an $m$-bucket ordering $\sigma$, a vertex $u\in V$ and integer $i\in [m]$, let $\sigma_{u\rightarrow i}$ denote the bucket order
defined by leaving the value of $\sigma(v)$ unchanged for $v\neq u$ and mapping $u$ to $i$.  More precisely:
$\sigma_{u\rightarrow i}(v) = \sigma(v)$ if $v\neq u$, and
$\sigma_{u\rightarrow i}(u) = i$.
\end{definition}

Note that $\sigma_{u\rightarrow i}$ may not be exactly an $m$-bucket ordering.
To be precise, we will say  that $\sigma_{u\rightarrow i}$ is an  $m$-bucket$^*$ ordering whenever $\sigma$ is an $m$-bucket ordering, for every $u\in V$ and $i\in [m]$.
Clearly, Observation~\ref{obs:bucket} holds for $m$-bucket$^*$ orderings as well, with a possible different constant hiding in the $\Theta$-notation.
The following lemma is proven using standard measure concentration inequalities:

\begin{lemma}\label{crux}
Fix an $m$-bucket ordering $\sigma$ of $V$.  With probability at least $1-n^{-10}$, for all $u\in V$ and $i\in [m]$:
\begin{equation}\label{cruxeq} \left |  \cost^{u,S}(\sigma_{u\rightarrow i}) - \cost^u(\sigma_{u\rightarrow i}) \right | = O(\eps \gamma n)\ .
\end{equation}
\end{lemma}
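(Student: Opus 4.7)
The plan is a standard Hoeffding-plus-union-bound argument, exactly as the remark preceding the lemma suggests. First I would fix $u \in V$ and $i\in[m]$ and examine the scalar random variable
$\cost^{u,S}(\sigma_{u\to i}) = \frac{n}{2s}\sum_{j=1}^s \cost^{u,v_j}(\sigma_{u\to i})$.
Since the sample $S = (v_1,\dots,v_s)$ is drawn i.i.d.\ uniformly from $V$, the summands $\cost^{u,v_j}(\sigma_{u\to i})$ are i.i.d., each lying in $\{0,1\}$, and have common mean $\frac{1}{n}\sum_{v\in V}\cost^{u,v}(\sigma_{u\to i}) = \frac{2}{n}\cost^u(\sigma_{u\to i})$. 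Thus $\cost^{u,S}(\sigma_{u\to i})$ is an unbiased estimator of $\cost^u(\sigma_{u\to i})$, expressed as a sum of $s$ independent terms each bounded in $[0, n/(2s)]$.

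Next I would apply Hoeffding's inequality to this sum. For any $\lambda > 0$,
$\Pr\!\left[\,\left|\cost^{u,S}(\sigma_{u\to i}) - \cost^u(\sigma_{u\to i})\right| > \lambda\,\right] \le 2\exp\!\left(-\frac{2\lambda^2}{s(n/(2s))^2}\right) = 2\exp\!\left(-\frac{8s\lambda^2}{n^2}\right)$.
Choosing $\lambda = c\,\eps\gamma n$ for a suitable absolute constant $c$ and plugging in $s = \Theta(\log n / (\eps\gamma)^2)$ with a sufficiently large hidden constant yields a failure probability of at most $2\exp(-\Omega(\log n)) = n^{-\Omega(1)}$, which I can push below, say, $n^{-13}$ by tuning the constant in $s$.

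Finally I would union bound over the $nm$ choices of the pair $(u,i)$. Since $m = \Theta(1/(\eps\gamma))$ and $\gamma \ge \Omega(\eps^2)$ in the high-cost regime (in particular $nm$ is polynomially bounded), the union bound costs at most a factor of $n \cdot \mathrm{poly}(1/(\eps\gamma))$, leaving a total failure probability below $n^{-10}$ as claimed, simultaneously for all $u\in V$ and all $i\in[m]$.

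There is really no obstacle here: $\sigma$ is fixed before the sample is drawn, so $\sigma_{u\to i}$ ranges over an $(nm)$-size family that is also fixed in advance, and no union bound over bucket orderings is needed. The only place one must be a bit careful is to make sure the summands are truly i.i.d.\ for each fixed $(u,i)$ — which they are, because the modification $u\mapsto i$ does not alter which coordinate of $\sigma$ is read by $\cost^{u,v_j}$ for $v_j\neq u$, and the rare event $v_j = u$ contributes at most $n/(2s) \ll \eps\gamma n$ and can be absorbed into the constant in $\lambda$.
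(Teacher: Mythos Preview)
Your argument is correct and is precisely the ``standard measure concentration'' argument the paper invokes without spelling out: Hoeffding for each fixed pair $(u,i)$, then a union bound over the $nm$ such pairs. The only cosmetic remark is that the event $v_j=u$ is harmless for a simpler reason than you give---$\cost^{u,u}(\cdot)=0$ identically---so the i.i.d.\ structure and unbiasedness hold exactly, with no slack needed.
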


\noindent
By  summing (\ref{cruxeq}) over all $(u,i)$ such that $i=\sigma(u)$, we get
\begin{corollary}\label{cor:main}
For any $m$-bucket order $\sigma$ with probability at least $1-n^{-10}$,
$ \left |  \cost^S(\sigma) - \cost(\sigma) \right | = O(\eps \gamma n^2)$.
\end{corollary}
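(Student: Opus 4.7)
The plan is to invoke Lemma~\ref{crux} in the trivial special case $i := \sigma(u)$ for each $u$, and then aggregate the resulting per-vertex deviations by the triangle inequality. Note that $\sigma_{u\rightarrow \sigma(u)} = \sigma$ for every $u \in V$, so specializing~(\ref{cruxeq}) to this choice of $i$ gives, on the single high-probability event provided by Lemma~\ref{crux}, that
\[ \left| \cost^{u,S}(\sigma) - \cost^u(\sigma) \right| = O(\eps \gamma n) \quad \text{for every } u \in V \text{ simultaneously.} \]

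Next, I would use the decompositions $\cost(\sigma) = \sum_{u \in V} \cost^u(\sigma)$ and $\cost^S(\sigma) = \sum_{u \in V} \cost^{u,S}(\sigma)$, both immediate from the definitions in Section~\ref{sec:high cost PTAS}. The triangle inequality then yields
\[ \left| \cost^S(\sigma) - \cost(\sigma) \right| \;\leq\; \sum_{u\in V} \left| \cost^{u,S}(\sigma) - \cost^u(\sigma) \right| \;=\; n \cdot O(\eps\gamma n) \;=\; O(\eps\gamma n^2), \]
which is exactly the asserted bound. The failure probability remains at most $n^{-10}$ because the entire argument is carried out on the single high-probability event of Lemma~\ref{crux}; no further union bound over $u$ is required.

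The main obstacle I expected here is essentially absent: the per-vertex error $O(\eps\gamma n)$, summed pessimistically over $n$ vertices without any cancellation, already lands on the target $O(\eps\gamma n^2)$, so no sharper concentration analysis (for instance, a direct Hoeffding bound on $\cost^S(\sigma)$ viewed as a sum of $s$ bounded i.i.d.\ contributions) is needed. The one point worth emphasizing is that Lemma~\ref{crux} is stated uniformly over $(u,i)$ on a single event, which is precisely what legitimizes the triangle-inequality step without paying an additional $\log n$ factor from a per-vertex union bound.
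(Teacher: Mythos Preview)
Your proposal is correct and follows exactly the paper's approach: the paper derives the corollary in one line, ``By summing~(\ref{cruxeq}) over all $(u,i)$ such that $i=\sigma(u)$,'' which is precisely your specialization $\sigma_{u\to\sigma(u)}=\sigma$ followed by the triangle-inequality aggregation over $u\in V$. Your remarks about the single high-probability event and the absence of an additional union bound are accurate and make explicit what the paper leaves implicit.
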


\bignegspace
\subsubsection{Arora et al's LP approach \cite{DBLP:journals/mp/AroraFK02}}\label{sec:aroraLP}

The benefit of  Lemma~\ref{crux} is the fact that (\ref{cruxeq}) can be written as a pair of linear inequalities in  variables $(x_{vj})_{v\in V\setminus\{u\}, j\in [m]}$,
where $x_{vj}$ is indicator for the predicate ``$\sigma(v) = j$''.
Indeed, $\cost^{u,S}(\sigma_{u\rightarrow i})$ is a known constant, and $\cost^u(\sigma_{u\rightarrow i})$ is a linear
combination of $(x_{vj})_{v\neq u, j\in [m]}$.
This property allowed  Arora et. al in  \cite{DBLP:journals/mp/AroraFK02} to introduce an LP over these variables, where the utility function  $\cost^S(\sigma)$ is clearly a linear function of
the system $(x_{vj})_{v\in V, j\in [m]}$.  Some obvious standard constraints are added: 
For all $v,j$, $x_{vj}\geq 0$ and for all $v$, $\sum_{j\in [m]} x_{vj}=1$, and of course $x_{vj}$ is hardwired as $1$ (resp. $0$) whenever $v\in S$ and $\sigma^*(v)=j$ (resp. $\sigma^*(v)\neq j$). The  \emph{almost balanced bucket} constraint  is also added:
$ \forall j\in [m]: \lfloor n/m\rfloor \leq \sum_{v\in V} x_{vj} \leq \lceil n/m\rceil\ .$
The following arguments in \cite{DBLP:journals/mp/AroraFK02} are by now classic: Randomly round the optimal LP solution
by independently drawing, for each $v\in V$, from the discrete distribution
assigning probability $x^*_{vj}$ to the $j$'th bucket.
Denote the resulting $m$-bucket order $\sigma'$.
As argued in \cite{DBLP:journals/mp/AroraFK02}, with high probability each
constraint in the system will be satisfied up to a possible
additive violation of magnitude depending on an $\ell_\infty$
and an $\ell_0$ (support size) property of the constraint. The precise statement is as follows:

\begin{lemma}[Essentially \cite{DBLP:journals/mp/AroraFK02}] \label{lem:LP beta}
If the optimal solution to the LP $x^*$ satisfies $\sum \beta_{vj} x^*_{vj} \leq \alpha$ for  $\beta \in \R^{|V|\times m}$ and $\alpha \in \R$, then with probability at least $1-\eta$ the rounded solution $\sigma'$ will violate the constraint by no more than $\|\beta\|_\infty\sqrt{\|\beta\|_0 \log (1/\eta)}$, where $\|\beta\|_0$ is the number of vertices $v\in V$ such that $\beta_{vi} \neq 0$ for some $i\in [m]$,  $\|\beta\|_\infty = \max_{v\in V, i\in [m]} |\beta_{vi}|$ and $\eta>0$ is any number.
\footnote{We have implicitly viewed $\sigma'$ as a vector $(\sigma'_{vj})_{v\in V, j\in [m]}$, with $\sigma'_{vi}$ indicator for $\sigma'(v)=i$.}
\end{lemma}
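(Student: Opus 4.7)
The plan is to recognize the statement as a concentration bound for a sum of independent bounded random variables, one per vertex, and apply Hoeffding's inequality.

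First I would reformulate the constraint value at the rounded point as a sum over vertices. Writing $\sigma'_{vj} = \one_{\sigma'(v)=j}$, the quantity of interest is $\sum_{v,j}\beta_{vj}\sigma'_{vj} = \sum_v X_v$, where $X_v \eqdef \beta_{v,\sigma'(v)}$. The rounding picks $\sigma'(v)$ independently for different $v$ from the distribution $\Pr[\sigma'(v)=j]=x^*_{vj}$, so the random variables $\{X_v\}_{v\in V}$ are mutually independent. By linearity of expectation, $\mathbb{E}\bigl[\sum_v X_v\bigr] = \sum_{v,j}\beta_{vj}x^*_{vj} \le \alpha$, so whenever $\sum_v X_v \le \alpha + t$ the LP constraint is violated by at most $t$.

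Next I would record the support/magnitude properties needed to feed into Hoeffding. If $\beta_{vi}=0$ for all $i\in[m]$ then $X_v\equiv 0$, so the number of nontrivial summands is exactly $\|\beta\|_0$. For each nontrivial $v$, $X_v$ lives in an interval of length at most $2\|\beta\|_\infty$ (in fact in $[-\|\beta\|_\infty,\|\beta\|_\infty]$). Applying the standard Hoeffding bound to the independent sum yields
\[
\Pr\!\left[\sum_v X_v - \mathbb{E}\Bigl[\sum_v X_v\Bigr] \ge t\right] \;\le\; \exp\!\left(-\frac{t^2}{2\|\beta\|_0\|\beta\|_\infty^2}\right).
\]
Setting the right-hand side equal to $\eta$ and solving for $t$ gives $t = \|\beta\|_\infty\sqrt{2\|\beta\|_0\log(1/\eta)}$, which (absorbing the constant $\sqrt{2}$ into the $O(\cdot)$ convention used implicitly in the statement) is exactly the promised additive violation bound. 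The same argument applied to $-\beta$ handles a two-sided bound if needed.

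There is really no serious obstacle here: the only subtle point is the correct identification of $\|\beta\|_0$ as the number of independent nonzero summands (counted per vertex, not per $(v,j)$ pair), which is precisely what makes the independence across vertices line up with the Hoeffding exponent. Every other ingredient, namely independence of the rounding across vertices, boundedness by $\|\beta\|_\infty$, and the expectation bound from LP feasibility, is immediate from the rounding scheme described just above the lemma.
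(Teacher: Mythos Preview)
Your argument is correct: the paper does not actually prove this lemma but attributes it to \cite{DBLP:journals/mp/AroraFK02}, and your Hoeffding-based derivation is precisely the standard argument underlying that result. The key observations you make---independence of the per-vertex summands $X_v=\beta_{v,\sigma'(v)}$, the vanishing of $X_v$ for vertices outside the support (so only $\|\beta\|_0$ terms contribute), and the range bound $|X_v|\le\|\beta\|_\infty$---are exactly what is needed, and the resulting deviation $\|\beta\|_\infty\sqrt{2\|\beta\|_0\log(1/\eta)}$ matches the stated bound up to the harmless constant.
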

 
In our case, consider an  LP constraint coming from (\ref{crux}).  Its corresponding coefficient vector $\beta$ satisfies  $\|\beta\|_0 \leq n$ and $\|\beta\|_\infty \leq 1$.
We conclude that  with probability at least $1-n^{-10}$, (\ref{cruxeq}) is satisfied with $\sigma = \sigma'$ and for all $u,i$,  and hence also the guarantee of Corollary~\ref{cor:main}.\footnote{All this happens with with possibly slightly worse constants hiding in the $O$-notation.}  Also, in virtue of the
\emph{almost balanced bucket} constraint and Lemma~\ref{lem:LP beta}, with probability at least $1-n^{-10}$ the rounded solution $\sigma'$ is
an $m$-bucket order.
  Additionally, by analyzing the coefficient vector $\beta_\utility$ corresponding to the LP utility function, with probability at least $1-n^{-10}$ the cost $\cost^S(\sigma')$ is bounded by $\LP(x^*) + O(\eps\gamma n^2)$, which
is bounded by $\cost^S(\sigma^*) + O(\eps \gamma n^2)$ by LP optimality.
Note also that the guarantee of Corollary~\ref{cor:main} also applies to $\sigma = \sigma^*$ with probability at least $1-n^{-10}$.  Combining using union bound and triangle inequality, one gets that $\cost(\sigma') \leq \cost(\sigma^*) + O(\eps\gamma n^2)$. 
We conclude the section with the following lemma that is implicit in \cite{DBLP:journals/mp/AroraFK02}
\begin{lemma} \label{lem:LP rounding works}
Given the correct bucketing on the vertices of $S$, one can construct a polynomially sized linear program whose rounded solution $\sigma'$ has the property $\cost(\sigma')\leq \cost(\sigma^*)+O(\gamma \eps n^2)$ with probability at least $1-n^{-9}$.
\end{lemma}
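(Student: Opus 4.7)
The plan is to formalize the LP-and-round construction sketched in Section~\ref{sec:aroraLP} and track all failure probabilities through a single union bound. First I would write down the LP precisely with variables $x_{vj}$ for $v\in V,\,j\in [m]$ and four classes of constraints: (a) simplex constraints $x_{vj}\ge 0$ and $\sum_j x_{vj}=1$; (b) hardwiring $x_{v,\sigma^*(v)}=1$ for $v\in S$; (c) for each $(u,i)\in V\times [m]$, a pair of inequalities encoding $|\cost^{u,S}(\sigma_{u\rightarrow i})-\cost^u(\sigma_{u\rightarrow i})|\le C\eps\gamma n$ from Lemma~\ref{crux} (linear in $(x_{v,\cdot})_{v\neq u}$ once the hardwired $v\in S$ values and the fixed bucket $i$ for $u$ are plugged in, making $\cost^{u,S}(\sigma_{u\rightarrow i})$ a constant); and (d) the almost-balanced bucket constraints $\lfloor n/m\rfloor\le \sum_v x_{vj}\le \lceil n/m\rceil$. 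The objective is the linear form $\cost^S(\sigma)$ in $x$. Since $m=\Theta(1/(\eps\gamma))$ is independent of $n$, the LP has $O(n)$ variables and $O(n)$ constraints and is solved in polynomial time.

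Next I would verify feasibility and execute the randomized rounding. Plugging in the indicator vector of $\sigma^*$ satisfies (a), (b), (d) trivially; Lemma~\ref{crux} applied to $\sigma^*$ guarantees (c) with probability $\ge 1-n^{-10}$. Hence on that event $\LP(x^*)\le \cost^S(\sigma^*)$, where $x^*$ is an LP optimum. For $v\notin S$ I would then draw $\sigma'(v)$ independently from the marginal $(x^*_{v,j})_{j\in [m]}$. Applying Lemma~\ref{lem:LP beta} with $\eta=n^{-12}$: constraints (c) and (d) have $\|\beta\|_\infty\le 1$ and $\|\beta\|_0\le n$, yielding violation $O(\sqrt{n\log n})$; the objective coefficient vector has $\|\beta\|_\infty\le n/2$ (each of the $s$ sample terms contributes at most $n/(2s)$) and $\|\beta\|_0\le n$, giving violation $O(n^{3/2}\sqrt{\log n})$. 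Union-bounding over the $O(n)$ constraints, all these bounds hold simultaneously with probability $\ge 1-n^{-10}$. For fixed $\eps$ and sufficiently large $n$, every such violation is $o(\eps\gamma n^2)$; moreover the bucket violation is $o(n/m)$, so $\sigma'$ is a legitimate $m$-bucket$^*$ order.

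Finally I would convert from $\cost^S$ to $\cost$ and conclude. On the success event, $\cost^S(\sigma')\le \LP(x^*)+O(\eps\gamma n^2)\le \cost^S(\sigma^*)+O(\eps\gamma n^2)$ by LP optimality and the objective's rounding bound. Summing the rounded constraint (c) specialized to $(u,\sigma'(u))$ over all $u\in V$ yields $|\cost^S(\sigma')-\cost(\sigma')|=O(\eps\gamma n^2)$, while Corollary~\ref{cor:main} applied to $\sigma^*$ gives $|\cost^S(\sigma^*)-\cost(\sigma^*)|=O(\eps\gamma n^2)$ with probability $\ge 1-n^{-10}$. A triangle-inequality combination, together with a final union bound over the $O(1)$ high-probability events, delivers $\cost(\sigma')\le \cost(\sigma^*)+O(\eps\gamma n^2)$ with overall probability $\ge 1-n^{-9}$, as required. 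The main obstacle is simply bookkeeping: choosing $\eta$ small enough in each invocation of Lemma~\ref{lem:LP beta} so that the union bound over the $O(n)$ constraints closes, while confirming that the $O(n^{3/2}\sqrt{\log n})$ worst-case violation from the objective remains comfortably inside the $O(\eps\gamma n^2)$ additive budget for fixed $\eps$ and large enough $n$.
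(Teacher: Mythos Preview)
Your proposal is correct and follows essentially the same route as the paper's argument in Section~\ref{sec:aroraLP}: set up the LP with the constraints coming from Lemma~\ref{crux}, observe feasibility of $\sigma^*$, round independently, control each constraint's violation via Lemma~\ref{lem:LP beta}, and chain $\cost(\sigma')\to\cost^S(\sigma')\to\LP(x^*)\to\cost^S(\sigma^*)\to\cost(\sigma^*)$ by the triangle inequality. Your bookkeeping is in fact more explicit than the paper's---in particular you spell out the $\|\beta\|_\infty\le n/2$, $\|\beta\|_0\le n$ bounds for the utility vector and the resulting $O(n^{3/2}\sqrt{\log n})$ rounding error, which the paper sweeps into ``by analyzing the coefficient vector $\beta_{\utility}$''---but the underlying argument is identical.
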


\negspace
\subsection{Query efficiency} \label{sec:high cost query PTAS}
\negspace
The problem is that expressing inequality (\ref{crux}) in the LP requires complete knowledge of  the input $T$.
If we take a revised look at this  strategy, we see that
the sample $S$ is not strong enough in the sense that it can be used to well approximate $\cost(\sigma)$ (per Corollary~\ref{cor:main})
for no more than $\poly(n)$ 
$m$-bucket orders $\sigma$ simultaneously, but certainly not for \emph{all} $m$-bucket orders.

For each $u\in V$ randomly select a sample $S^u  = (v^u_1,\dots, v^u_p)$ of vertices of $V$, where $p = O((\gamma\eps)^{-2} \log n)$, each sample
$S^u$ is chosen independently of the other samples, and the $v^u_i$'s are chosen uniformly at random from $V$, with repetitions.
Denote the ensemble $\{S^u: u\in V\}$ by $\S$.  For any $m$-balanced ordering $\pi$ on $V$, define
$\cost^{u,\S}(\pi) = \frac {n} {2p} \sum_{i=1}^p \cost^{u,v^u_i}(\pi)$ and
$\cost^\S(\pi) = \sum_{u\in V} \cost^{u,\S}(\pi)$.
It is not hard to see that $\cost^\S(\pi)$ is an unbiased estimator of $\cost(\pi)$, for any $\pi$.
Using standard measure concentration bounds, we have the following:
\begin{lemma}\label{thm:bigsample}
With probability at least $1-n^{-10}$, uniformly for all $m$-balanced orderings $\sigma$ on $V$,
$\left | \cost^\S(\sigma) - \cost(\sigma) \right | = O(\eps\gamma n^2)$.
\end{lemma}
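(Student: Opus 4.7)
The plan is to establish tight concentration of $\cost^{\S}(\sigma)$ around $\cost(\sigma)$ for a single fixed $m$-bucketed ordering $\sigma$ via Hoeffding's inequality, and then take a union bound over the at most $m^n$ functions $V \to [m]$. The scheme works because $p$ scales as $(\eps\gamma)^{-2}\log n$, which will make the Hoeffding tail for each $\sigma$ exponentially small in $n\log n$; this handily absorbs the $m^n$ factor from the union bound, since $\log m = O(\log n)$ (as $m = \Theta(1/(\eps\gamma))$).

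Fix an $m$-bucketed ordering $\sigma$. By construction, $\cost^{\S}(\sigma) = \tfrac{n}{2p}\sum_{u\in V}\sum_{i=1}^{p} \cost^{u,v^u_i}(\sigma)$, where the $np$ random variables $\cost^{u,v^u_i}(\sigma) \in \{0,1\}$ are mutually independent (the samples $S^u$ are drawn independently across $u$, and within each $S^u$ the points are i.i.d.\ uniform). Since $\Pr[\cost^{u,v^u_i}(\sigma) = 1] = \tfrac{1}{n}\sum_{v \in V}\cost^{u,v}(\sigma) = \tfrac{2}{n}\cost^u(\sigma)$, the estimator $\cost^{\S}(\sigma)$ is unbiased. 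Viewing it as $\tfrac{n}{2p}$ times a sum of $np$ independent Bernoullis, Hoeffding's inequality applied to summands in $[0, n/(2p)]$ yields, for any $t > 0$,
\[ \Pr\bigl[\,|\cost^{\S}(\sigma) - \cost(\sigma)| > t\,\bigr] \leq 2\exp\!\left(-\frac{8pt^2}{n^3}\right). \]
Setting $t = c\eps\gamma n^2$ and $p = c'(\eps\gamma)^{-2}\log n$ bounds the right-hand side by $2\exp(-8c'c^2\, n\log n)$.

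Finally, union-bound over the at most $m^n$ distinct $m$-bucketed orderings. Since $m = \Theta(1/(\eps\gamma)) = \Theta(\eps^{-3})$, we have $n\log m = O(n\log(1/\eps)) = O(n \log n)$ in the standard regime $\eps \geq 1/\poly(n)$. Choosing the hidden constant $c'$ in $p$ large enough that $8c'c^2\, n\log n - n\log m \geq 11\log n + 1$ drives the total failure probability below $n^{-10}$, as required.

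The only delicate step is ensuring the Hoeffding exponent dominates the log-cardinality of the union bound. The saving grace is that the target deviation $t$ is quadratic in $n$, which contributes a factor of $n$ to $t^2/n^3$ in the exponent, and $p$ carries an additional $\log n$; together these yield an $\Omega(n\log n)$ exponent that beats the $O(n\log n)$ from the union bound once $c'$ is sufficiently large.
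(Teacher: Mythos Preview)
Your argument is correct and is exactly the ``standard measure concentration'' computation the paper alludes to without writing out: Hoeffding over the $np$ independent Bernoulli summands, followed by a union bound over the at most $m^n$ bucket assignments. The only quibble is your handling of the regime ``$\eps \ge 1/\poly(n)$'': as written, the universal constant $c'$ would have to depend on the polynomial degree, since $n\log m$ and $8c'c^2 n\log n$ are then of the same order. In the paper's intended setting $\eps$ (hence $m$) is a fixed constant, so $n\log m = o(n\log n)$ and any positive $c'$ suffices for large $n$; your inequality $8c'c^2 n\log n - n\log m \ge 11\log n$ then holds automatically.
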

\begin{lemma}\label{thm:bigsample1}
Fix an $m$-balanced ordering $\sigma$. With probability at least $1-n^{-10}$, uniformly for all $u\in V$ and $i\in [m]$,
\begin{equation}\label{eq:bigsample1}
\left | \cost^{u,S}(\sigma_{u\rightarrow i}) - \cost^{u,\S}(\sigma_{u\rightarrow i}) \right | = O(\eps\gamma n)\ .
\end{equation}
\end{lemma}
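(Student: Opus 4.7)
The plan is to prove Lemma~\ref{thm:bigsample1} by going through the true value $\cost^u(\sigma_{u\to i})$ and applying the triangle inequality. Both $\cost^{u,S}(\sigma_{u\to i})$ and $\cost^{u,\mathcal{S}}(\sigma_{u\to i})$ are unbiased estimators of $\cost^u(\sigma_{u\to i})$, and it suffices to show that each is within $O(\eps\gamma n)$ of this common target uniformly over all choices of $(u,i)$, with failure probability at most $n^{-10}/2$.

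For the $S$-estimator, the required bound is exactly the content of Lemma~\ref{crux}, which already gives failure probability $\leq n^{-10}$ for a fixed $\sigma$ uniformly over all $u\in V$ and $i\in [m]$. So the only new work concerns the ensemble estimator $\cost^{u,\mathcal{S}}(\sigma_{u\to i})$. Here I would fix a single pair $(u,i)$ and observe that $\cost^{u,\mathcal{S}}(\sigma_{u\to i}) = \frac{n}{2p}\sum_{j=1}^{p} X_j$ where the variables $X_j = \cost^{u,v^u_j}(\sigma_{u\to i})\in\{0,1\}$ are i.i.d.\ (since $S^u$ is chosen independently of the other $S^{u'}$ and uniformly from $V$), with common mean $(2/n)\cost^u(\sigma_{u\to i})$. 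A standard Hoeffding bound then shows that
\[
\Pr\!\left[\,\bigl|\tfrac{1}{p}\textstyle\sum_{j}X_j-\mathbb{E}[X_1]\bigr| \geq c\,\eps\gamma\,\right] \leq 2\exp(-2c^2(\eps\gamma)^2 p).
\]
Since $p = \Theta((\eps\gamma)^{-2}\log n)$, taking the constant in $p$ sufficiently large makes the right-hand side at most $n^{-13}$, say. Multiplying the inequality inside the probability by $n/2$ yields $|\cost^{u,\mathcal{S}}(\sigma_{u\to i})-\cost^u(\sigma_{u\to i})| = O(\eps\gamma n)$.

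To conclude I would union bound over all $nm$ pairs $(u,i)\in V\times [m]$; since $m=\Theta(1/(\eps\gamma))$ is polynomial in $n$ (in the high-cost regime $\gamma=\Theta(\eps^2)$), the total failure probability is at most $nm\cdot n^{-13}\leq n^{-10}/2$ for sufficiently large hidden constants. Combining with Lemma~\ref{crux} (applied to $\sigma$ and all the $\sigma_{u\to i}$) via the triangle inequality, we obtain, uniformly in $u,i$,
\[
|\cost^{u,S}(\sigma_{u\to i}) - \cost^{u,\mathcal{S}}(\sigma_{u\to i})| \leq |\cost^{u,S}(\sigma_{u\to i}) - \cost^u(\sigma_{u\to i})| + |\cost^{u,\mathcal{S}}(\sigma_{u\to i}) - \cost^u(\sigma_{u\to i})| = O(\eps\gamma n),
\]
with total failure probability at most $n^{-10}$, as required.

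There is no real obstacle here: the one subtlety worth flagging is that the result is stated for a \emph{fixed} $\sigma$ (so the sample sizes need not support a union bound over all $m$-balanced orders), while it must be uniform over the $(u,i)$-perturbations. This is compatible with the $O(\log n)$-factor in $p$, whereas a uniform statement over all $m$-balanced $\sigma$ would require a much larger sample (that is exactly what Lemma~\ref{thm:bigsample} does, via a supported enumeration argument). Keeping this distinction in mind is the only delicate point; the rest is Hoeffding plus a union bound.
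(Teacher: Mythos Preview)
Your proposal is correct and is exactly the kind of argument the paper has in mind: the paper gives no explicit proof, merely saying ``using standard measure concentration bounds'' before stating the lemma, and your Hoeffding-plus-union-bound argument (routed through the common target $\cost^u(\sigma_{u\to i})$ via the triangle inequality, invoking Lemma~\ref{crux} for the $S$-side) is precisely that standard argument. The one cosmetic point is that your parenthetical ``applied to $\sigma$ and all the $\sigma_{u\to i}$'' is redundant---a single application of Lemma~\ref{crux} to the fixed $\sigma$ already yields the bound uniformly over all $(u,i)$.
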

\noindent
By  summing  (\ref{eq:bigsample1})
over all $(u,i)$ s.t. $i=\sigma(u)$, we get
\begin{corollary}\label{thm:bigsample2}
Fix an $m$-balanced ordering $\sigma$. With probability at least $1-n^{-10}$,
$\left | \cost^{S}(\sigma) - \cost^\S(\sigma) \right | = O(\eps\gamma n^2)$.
\end{corollary}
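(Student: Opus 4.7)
The plan is to derive this corollary directly from Lemma~\ref{thm:bigsample1} by choosing, for each vertex $u$, the index $i = \sigma(u)$, for which the modified bucket ordering $\sigma_{u\to i}$ collapses back to the original $\sigma$. First, I would condition on the high-probability event of Lemma~\ref{thm:bigsample1}, which guarantees that, simultaneously over all pairs $(u,i)$ with $u\in V$ and $i\in[m]$, we have $|\cost^{u,S}(\sigma_{u\to i}) - \cost^{u,\S}(\sigma_{u\to i})| = O(\eps\gamma n)$. This joint event has probability at least $1-n^{-10}$ by hypothesis, which is what we ultimately need for the corollary.

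Next, for each $u\in V$ I would specialize to $i = \sigma(u)$. Since $\sigma_{u\to \sigma(u)}$ leaves the bucket of $u$ unchanged, it equals $\sigma$ itself, so inequality (\ref{eq:bigsample1}) reduces to the per-vertex statement $|\cost^{u,S}(\sigma) - \cost^{u,\S}(\sigma)| = O(\eps\gamma n)$. Recalling the definitions $\cost^S(\sigma) = \sum_{u\in V} \cost^{u,S}(\sigma)$ and $\cost^\S(\sigma) = \sum_{u\in V} \cost^{u,\S}(\sigma)$, I would then sum these $n$ inequalities and apply the triangle inequality to obtain
\[
\left|\cost^S(\sigma) - \cost^\S(\sigma)\right| \;\leq\; \sum_{u\in V}\left|\cost^{u,S}(\sigma) - \cost^{u,\S}(\sigma)\right| \;=\; n\cdot O(\eps\gamma n) \;=\; O(\eps\gamma n^2),
\]
which is exactly the desired bound, on the same high-probability event.

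There is no real obstacle: the only subtlety is that the per-vertex discrepancy must hold \emph{uniformly} over $u\in V$, rather than only in expectation or for each $u$ separately, since we cannot afford an extra union bound after summing. This uniformity is precisely what Lemma~\ref{thm:bigsample1} delivers (the conclusion is stated as holding simultaneously for all $u$ and $i$ on a single event of probability $1-n^{-10}$), so once that lemma is invoked the corollary is essentially a one-line aggregation via the triangle inequality.
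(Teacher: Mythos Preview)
Your proposal is correct and matches the paper's own argument essentially verbatim: the paper simply states that the corollary follows ``by summing (\ref{eq:bigsample1}) over all $(u,i)$ such that $i=\sigma(u)$,'' which is exactly the specialization $\sigma_{u\to\sigma(u)}=\sigma$ followed by the triangle inequality that you spell out.
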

We build an LP as in Section~\ref{sec:aroraLP}, except that (\ref{eq:bigsample1}) replaces (\ref{cruxeq}).
Note that the coefficient vectors $\beta$ of the new constraints
now satisfy $\|\beta\|_0 = O(p) = O((\gamma \eps)^{-2} \log n)$ and $\|\beta\|_\infty = O(n/p) = O(n\gamma^2\eps^2/\log n)$.
Using Lemma \ref{lem:LP beta} and a similar  analysis as in Section~\ref{sec:aroraLP}, an analog of Lemma \ref{lem:LP rounding works} can be proven. That is, we conclude that with probability at least $1-n^{-9}$,
the $m$-bucketed ordering $\sigma'$ outputted by rounding the optimal LP solution satisfies $ \cost^\S(\sigma') \leq \cost^\S(\sigma^*) + O(\eps \gamma n^2)$.  By Lemma~\ref{thm:bigsample} this implies that $\cost(\sigma') \leq \cost(\sigma^*) + O(\eps\gamma n^2)$.
Algorithm \ref{alg:high cost} (Appendix~\ref{sec:alg:high cost}) summarizes the query efficient PTAS for MFAST high cost case.
The $k$-CC high cost case can be solved in similar lines,
though this case is slightly easier because the clusters
need not be balanced.  
\negspace
\section{Discussion and Future Work}
We believe that in the low cost $k$-CC case, there should be a PTAS with efficient query complexity, running in time $\poly(n, \eps^{-1}, k)$ (not exponential in $k, \eps^{-1}$), assuming the low cost case in each recursive instance.  This is true for MFAST, and we leave the question of achieving it for $k$-CC to future work.

\negspace
\bibliographystyle{plain}
\bibliography{sublinear_PTAS}

\appendix

\section{Proof of Lemma~\ref{lem:deg in S and C equal}}\label{sec:proof:lem:deg in S and C equal}

This is a simple application of the following more general
well known sampling principle:  If $V_1,\dots, V_M$ is a collection of subsets 
of $V$ and $T $ is a sample
of $N$ uniformly chosen elements from $V$ (with repetition),
then with probablity $1-\eta$,
 for all $i=1,\dots, M$, 
$ \left | \frac { |V_i|}n - \frac {|V_i\cap T|}{|T|}\right | = O\left (\sqrt{N^{-1}\log (M/\eta)}\right )$.


\section{Proof of Lemma~\ref{lem:Cstar almost hat C}}\label{sec:proof:lem:Cstar almost hat C}

We start by proving the inclusion 
\begin{equation}\label{inclusion1} \hat C_j \subseteq C^*_j \cup \Vcostly\ .\end{equation} Assume for contradiction that there exist some $v \in \hat C_j \setminus (C^*_j \cup \Vcostly)$. Let $i \in [k]$, $i\neq j$ be such that $v \in C^*_i$. As $v \notin \Vcostly$ we know by Theorem \ref{thm:cost_add_apx} that 
$ \cost^*(v,i) = \cost^*(v) \leq \cost^*(v,j) \leq \tcost(v,j)+\beta n \leq  \frac{c_3 n}{k^2} + \beta n$. 
We get:
\begin{equation} \label{eq:C_i+C_j}
\frac{2nc_3}{k^2} + 2n\beta \geq \cost^*(v,i)+\cost^*(v,j) \geq {|C^*_i|+|C^*_j|-1}\ ,
\end{equation}
where the right hand inequality is a consequence of the fact
for any $u\in (C^*_i\cup C^*_j)\setminus \{v\}$, $v$ will either
incur a price w.r.t. $u$ if it is included in $C^*_i$ or it is included
in $C^*_j$.
Hence,
\begin{equation}\label{maxCiCj}
 \max\{ |C^*_i|,|C^*_j| \}\leq \frac{2nc_3}{k^2} + 2n\beta + 1 \leq \frac{c_5n}{k^2} 
\end{equation}
for some constant $c_5$ that can be made arbitrarily small by tuning $c_2$ (the constant product in $\beta$) and $c_3$.
Since this holds for any $i \neq j$ satisfying  $C^*_i \cap (\hat C_j \setminus \Vcostly) \neq \emptyset$ we have that 
$$ |C^*_j| \geq |\hat C_j| -|\Vcostly| - \sum_{i :\; i \neq j , C^*_i \cap (\hat C_j \setminus \Vcostly) \neq \emptyset} |C^*_i|  \geq 
\frac{n}{2k} - \frac{c_4n}{k} - \frac{c_5n}{k} > \frac{c_5n}{k}\ ,$$
where we used (\ref{Vcostlysize}) and (\ref{maxCiCj}),
and ensure that   $2c_5+c_4<1/2$. We derive a contradiction to
(\ref{maxCiCj}).

We now prove that the inclusion $C^*_j \subseteq \hat C_j \cup \Vcostly$.  Notice that by (\ref{inclusion1}), we conclude that 
$|C^*_j|$ is lower bounded by $\frac n k \left (\frac 1 {2} - {c_4} \right ) \geq \frac n {4k}$,
 as long as $c_4 < 1/4$.  
Assume for the sake of contradiction that there exists some $v \in C^*_j \setminus \Vcostly$ such that for some $i\neq j$, $v \in \hat C_i$.
By the guarantee of Theorem~\ref{thm:cost_add_apx} we have that $\cost^*(v,j) \leq \cost^*(v,i) \leq \tcost(v,i) + \beta n\leq c_3 n/k^2 + \beta n$.  This gives us again (\ref{eq:C_i+C_j}), leading to (\ref{maxCiCj}), contradicting our lower bound on $|C^*_j|$
for sufficiently small $c_5$.
 
\noindent
This concludes the lemma proof.

\section{Continuation of Proof of Lemma~\ref{lem:close in V}}\label{sec:proof:lem:close in V}
We now proceed with the proof of the lemma.

Consider the following bipartite directed graph $H = (U, \Gamma)$.  The vertex set $U$ is defined as follows: 
$ U = \{C:\ C\mbox{ is a cluster in }\S\} \cup \{D: D\mbox{ is a cluster in }\tilde \S\}$.
The edge set $\Gamma$ is defined as follows:  For any
cluster $C \in \S$, add a directed edge $(C,D)$, where  $D$ maximizes $|D' \cap C|$ over clusters $D'$ of $\tilde \S$ breaking ties arbitrarily.  
Symmetrically,  for cluster $D$ of $\tilde S$ add a directed
edge $(D,C)$ where $C$ maximizes $|C'\cap D|$ over clusters
$C'$ of $\S$, breaking ties arbitrarily.
Note that the out degree of all vertices of $H$ is exactly $1$.
We will now define a bipartite matching on $U$ using
the following rule:
If for some $C,D$, both $(C,D)\in \Gamma$ and $(D,C)\in \Gamma$, then match $C$ to $D$, and call the pair $(C,D)$ \emph{a good match}.  The remaining (unmatched) vertices are matched arbitrarily , and the corresponding pairs are called  \emph{bad matches}.

By the above claim, if $(C,D)$ is a good match, then
$ \max\{|C\setminus D|, |D\setminus C|\} = O(\gamma^{1/3}|S|)$.

We now show that if $(C,D)$ is a bad match then both
$|C| = O(\delta^{1/3}|S|)$ and $|D| = O(\delta^{1/3}|S|)$.
By symmety, it suffices to show that $|C|=O(\delta^{1/3}|S|)$.
Let $C$ be a cluster of $\S$ that is a member of a bad match. Let $D$ be the unique cluster of $\tilde \S$ such that $(C,D)\in\Gamma$ and let $C'$ be the unique cluster of $\S$ such that $(D,C')\in\Gamma$. By the definition of a bad match we know that $C \neq C'$. By the above claim we have that both $|C\setminus D| = O(\delta^{1/3}|S|)$
and $|D\setminus C'| = O(\delta^{1/3}|S|)$, which implies
$|C\setminus C'| = O(\delta^{1/3}|S|)$.  But $C\cap C'=\emptyset$, therefore $|C| = O(\delta^{1/3}|S|)$.
This concludes the proof of the lemma.

\section{Algorithm for High-Cost MFAST}\label{sec:alg:high cost}
\begin{algorithm}
\caption{query efficient PTAS for MFAST}
 \label{alg:high cost}
\emph{Input}: A graph T=(V,A) an approximation parameter $\eps$ and assumed minimal cost parameter $\gamma$

\emph{Output}: a permutation $\sigma: V \to [n]$ where $n=|V|$

For each $u \in V$ randomly select a sample $S^u=\{v_1^u,\ldots,v_p^u\}$, where $p=O( (\eps \gamma)^{-2} \log(n) )$ and the $v_i^u$'s are chosen from $V$ with repetitions.  Denote the ensemble $\{S^u: u\in V\}$ by $\S$. (This is the \emph{verification} sample.)

Set $S$ as a set of random i.i.d.\ vertices $S=\{v_1,\ldots,v_s\}$ chosen with repetitions, where $s=O( (\eps \gamma)^{-2} \log(n) )$.  (This is the \emph{enumeration} sample.)

Set $m=O((\gamma \eps)^{-1})$ as the number of buckets.

For each possible $m$-bucket order of the vertices of $S$ perform the following:
\begin{itemize}
\item Construct an LP as described in Section~\ref{sec:high cost query PTAS}, producing a fractional $m$-bucket order that agrees with the bucketing of  $S$.

\item Solve the LP and round it as described in Section \ref{sec:aroraLP}.
\end{itemize}
Pick the rounded solution whose approximated cost w.r.t.\ $\cal S$ ($\cost^{\cal S}(\cdot)$) is minimal, and output an arbitrary permutation extending it.
\end{algorithm}

\end{document}